\newcommand\eat[1]{}
\title{Complexity of coalition structure generation}
\author{%
	Haris Aziz\inst{1} \and
	Bart de Keijzer\inst{2}}
\institute{%
	Institut f\"ur Informatik,
	Technische Universit\"at M\"unchen, 
	80538 M\"unchen, Germany \\
	\email{aziz@in.tum.de}
	\and 
		CWI Amsterdam, 1098 XG Amsterdam, The Netherlands\\
	\email{B.de.Keijzer@cwi.nl}}
\begin{document}

\sloppy
\title{Complexity of coalition structure generation}
\maketitle

\begin{abstract}
We revisit the \emph{coalition structure generation problem} in which the goal is to partition the players into exhaustive and disjoint coalitions so as to maximize the social welfare. One of our key results is a general polynomial-time algorithm to solve the problem for all coalitional games provided that player types are known and the number of player types is bounded by a constant. As a corollary, we obtain a polynomial-time algorithm to compute an optimal partition for weighted voting games with a constant number of weight values and for coalitional skill games with a constant number of skills. We also consider well-studied and well-motivated coalitional games defined compactly on combinatorial domains. For these games, we characterize the complexity of computing an optimal coalition structure by presenting polynomial-time algorithms, approximation algorithms, or $\mathsf{NP}$-hardness and inapproximability lower bounds.
\end{abstract}

\section{Introduction}
Coalition formation is an important issue in multiagent systems with cooperating agents. 
Coalitional games have been used to model various cooperative settings in operations research, artificial intelligence and multiagent systems~\citep[see e.g, ][]{BaRo08a,BaRo09a,EGG+07a}. 
The area of coalitional game theory which studies coalition formation has seen considerable growth over the last few decades.
Given a set of agents $N$, a coalitional game is defined by a valuation function $v:N\rightarrow R$ where for $C \subseteq N$, $v(C)$ signifies the value which players in $C$ can generate by cooperating.

In a coalitional game, a partition of the players into exhaustive and disjoint coalitions is called a \emph{coalition structure}.
In the \emph{coalition structure generation problem}, the goal is to find a coalition structure $\pi$ of $N$ that maximizes the social welfare $\sum_{C\in \pi}v(C)$. 
We will refer to this problem of finding an optimal coalition structure as {\sc OptCS}.
In this paper, we conduct a detailed investigation of computing optimal coalition structures that give the maximum  social welfare. 
Computing optimal coalition structures is a natural problem in which the aim is to utilize resources in the most efficient manner. 
 
{\sc OptCS} has received attention in the artificial intelligence community where the focus has generally been on computing optimal coalition structures for general coalition formation games~\citep{MSRWBJ10a,SLA+99a} without any combinatorial structure. Traditionally, the input considered is an oracle called a characteristic function which returns the value for any given coalition (in time polynomial in the number of players). In this setting, it is generally assumed that the value of a coalition does not depend on players who are not in the coalition.
Computing optimal coalition structures is a computationally hard task because of the huge number of coalition structures. The total number of coalition structures for a player set of size $n$ is $B_n\sim \Theta(n^n)$ where $B_n$ is the $n$th Bell number. A number of algorithms have been developed in the last decade which attempt to satisfy many desirable criteria, e.g.\ outputting an optimal solution or a good approximation, the ability to prune, the anytime property, worst case guarantees, distributed computation etc.~\citep{MSRWBJ10a,RSJG09a,SLA+99a,SeAd10b}. In all of the cases,  the algorithms have a worst-case time complexity which is exponential in $n$. 
In this paper, we show that the picture is not that bleak if player types are known and the number of player types is bounded by a constant. In fact for such a condition, there is a polynomial-time algorithm for {\sc OptCS} for all coalitional games. In many multiagent systems, it can be reasonable to assume that the agents can be divided into a bounded number of types according to the player attributes.

We also study the complexity of {\sc OptCS} for a number of compact coalitional games.
Coalitional games can be represented compactly on combinatorial domains where the valuation function is implicitly defined~\citep{DeFa08a,DePa94a}. Numerous such classes of coalitional games have been the subject of recent research in multiagent systems: weighted voting games~\citep{EGG+07a}; skill games~\citep{BaRo08a}; multiple weighted voting games~\citep{ABH10a}; network flow games~\citep{BaRo09a}; spanning connectivity games~\citep{ALPS09b}; and matching games \citep{KePa03a}. 
Apart from some exceptions~(skill games~\citep{BMJK10a} and marginal contribution nets~\citep{0CI+09a}), most of the algorithmic research for these classes of games has been on computing stability-based solutions. In the paper, we characterize the complexity of {\sc OptCS} for many compact games by presenting polynomial-time exact algorithms, approximation algorithms, or $\mathsf{NP}$-hardness and inapproximability lower bounds. Throughout the paper, we assume familiarity with fundamental concepts in computational complexity~\citep{complexity}.

\paragraph{Contribution} 

In this paper, we undertake a detailed and systematic study of computing optimal coalition structures for many important combinatorial optimization coalitional games.

Our most important result is a general polynomial-time algorithm to compute an optimal coalition structure for any coalitional game when the player types are known and the number of player types is bounded by a fixed constant. 
As a corollary, we obtain a polynomial-time algorithm to compute an optimal coalition structure for weighted voting games with a constant number of weight values, linear games with a constant number of desirability classes, and all known coalitional skill games with a constant number of skills. 

In contrast to our general algorithmic result, we show that finding the player types is intractable in general from a communication and computational complexity point of view.

 We present a 2-approximation algorithm for the case of weighted voting games and show that this approximation bound is the best possible. 
	Our approximation and inapproximability results concerning weighted voting games may be of independent interest since they address a problem in the family of knapsack problems~\citep{KelPfePis04} which has not been studied before.
	
	We also examine well-known coalitional games based on graphs and characterize the complexity of computing the optimal coalition structures. 
	Interestingly for certain combinatorial optimization games for which the combinatorial optimization problem is $\mathsf{NP}$-hard, the problem of computing an optimal coalition structure is easy.


\section{Preliminaries}
\label{sec-prel}
In this section, we define several important classes of coalitional games and formally define the fundamental computational problem {\sc OptCS}.  

\subsection{Coalitional games}
We begin with the formal definition of a \emph{coalitional game}.

\begin{definition}[Coalitional games]	
A \emph{coalitional game} is a pair $(N,v)$ where $N=\{1,\ldots, n\}$ is a set of players and $v:2^N \rightarrow \mathbb{R}$ is a \emph{characteristic or valuation function} that associates with each coalition $C \subseteq N$ a payoff $v(C)$ where $v(\varnothing)=0$.
A coalitional game~$(N,v)$ is \emph{monotonic} when it satisfies the property that $v(C)\leq v(D)$ if $C \subseteq D$.
\end{definition}
Throughout the paper, when we refer to a coalitional game, we assume such a coalitional game with transferable utility. For the sake of brevity, we will sometimes refer to the game $(N,v)$ as simply~$v$.

\begin{definition}[Simple game]	
A \emph{simple game} is a monotonic coalitional game $(N,v)$ with $v:2^N \rightarrow \{0,1\}$ such that $v(\varnothing)=0$ and $v(N)=1$. A coalition $C \subseteq N$ is \emph{winning} if $v(C)=1$ and \emph{losing} if $v(C)=0$. A \emph{minimal winning coalition} (MWC) of a simple game $v$ is a winning coalition in which defection of any player makes the coalition losing.
A simple game can be represented by $(N,W^m)$, where $W^m$ is the set of minimal winning coalitions.
\end{definition}

For any monotonic coalitional game, one can construct a corresponding threshold game. Threshold versions are common in the multiagent systems literature; see for instance \citep{BaRo09a, EGG+07a}.
\begin{definition}[Threshold versions]\label{def:threshold}
For each coalitional game~$(N,v)$ and each threshold~$t\in\mathbb R^+$, the \emph{corresponding threshold game} is defined as the coalitional game~$(N,v^t)$, where
\begin{displaymath}
v^t(C)= \begin{cases} 1 & \textrm{if $v(C)\geq t$,}\\ 
0 & \textrm{otherwise.} \end{cases}%
\end{displaymath}
\end{definition}
It can easily be verified that if a game~$(N,v)$ is monotonic, then for any threshold~$t \leq v(N)$, the threshold version~$(N,v^t)$ is a simple game. 

\subsection{Coalitional game classes}
We now review a number of specific classes of coalitional games. Here we adopt the convention that if CLASS denotes a particular class of games, we have T-CLASS refer to the class of threshold games corresponding to games in CLASS, \ie for every threshold~$t$, $(N,v^t)$ is in T-CLASS if and only if $(N,v)$ is in CLASS.

Weighted voting games are a widely used class of monotonic games.
\begin{definition}[Weighted voting games~\citep{EGG+07a}]	
A \emph{weighted voting game (WVG)} is a simple game $(N,v)$ for which there is a \emph{quota $q \in \mathbb R^+$} and a \emph{weight~$w_i \in \mathbb R^+$} for each player~$i$  such that 
\[\text{
$v(C)=1$ if and only if $\sum_{i\in C}w_i\ge q$.
}
\]
The WVG with quota~$q$ and weights~$w_1,\ldots,w_n$ for the players is denoted by $[q;w_1,\ldots,w_n]$, where we commonly assume $w_i \geq w_{i+1}$  for $1\le i<n$. 

A \emph{multiple weighted voting game} (MWVG) is the simple game $(N,v)$ for which there are WVGs $(N,v_1),\ldots,(N,v_m)$  such that 
\[\text{
$v(C)=1$ if and only if $v_k(C)=1$ for $1 \leq k \leq m$.}
\]
We denote the MWVG game composed of $(N,v_1),\ldots,(N,v_m)$ by $(N,v_1\wedge\cdots\wedge v_m)$. 
\end{definition}
 
Other important classes of games are defined on graphs. Among these are \emph{spanning connectivity games}, \emph{independent set games}, \emph{matching games}, \emph{network flow games}, and \emph{graph games}, where either nodes or edges are controlled by players and the value of a coalition of players depends on their ability to connect the graph, enable a bigger flow, or obtain a heavier matching or edge set.
\begin{definition}[Spanning connectivity game \citep{ALPS09b}]
For each connected undirected graph $G = (V,E)$, we define the \emph{spanning connectivity game (SCG)} on $G$ as the simple game $(N,v)$ where $N=E$ and for all $C \subseteq E$, $v(C) = 1$ if and only if there exists some $E' \subseteq C$ such that $T = (V,E')$ is a spanning tree.
\end{definition}

\begin{definition}[Independent set game \citep{DeFa08a}]
For each connected undirected graph $G = (V,E)$, we define the \emph{independent set game (ISG)} on $G$ as the game $(N,v)$ where $N = V$ and for all $C \subseteq V$, $v(C)$ is cardinality of the maximum independent set on the subgraph of $G$ induced on $C$.
\end{definition}

\begin{definition}[Matching game \citep{KePa03a}]
Let $G = (V,E,w)$ be a weighted graph.
The  \emph{matching game} corresponding to~$G$ is the coalitional game $(N,v)$ with $N=V$ and for each $C \subseteq N$, the value $v(C)$ equals the weight of the maximum weighted matching of the subgraph induced by $C$. 
\end{definition}

\emph{Graph games} are likewise defined on weighted graphs \citep{DePa94a}.

\begin{definition}[Graph game \citep{DePa94a}]
For a weighted graph $(V,E,w)$, the \emph{graph game (GG)} is the coalitional game $(N,v)$ where $N=V$ and for $C \subseteq N$, $v(C)$ is the weight of edges in the subgraph induced by $C$. In this paper, we sometimes assume that the graph corresponding to a graph game has only positive edge weights and denote such graph games by GG$^+$. We denote the class of graph games where negative edge weights are allowed by GG. Note that for this latter general class of graph games, we allow the characteristic function $v$ to map to negative reals.
\end{definition}

A flow network  $(V,E,c,s,t)$ consists of a directed graph $(V,E)$, with capacity on edges $c:E \rightarrow \mathbb{R}^+$, a source vertex $s\in V$, and a sink vertex $t\in V$. A network flow is a function $f:E\rightarrow \mathbb{R}^+$, which obeys the capacity constraints and the condition that the total flow entering any vertex (other than $s$ and $t$) equals the total flow leaving the vertex. The value of the flow is the maximum amount flowing out of the source. 
\begin{definition}[Network flow game \citep{BaRo09a}]	
For a flow network $(V,E,c,s,t)$, the associated \emph{network flow game (NFG)} is the coalitional game $(N,v)$, where $N = E$ and for each $C \subseteq E$ the value $v(C)$ is the value of the maximum flow~$f$ with $f(e) = 0$ for all $e \in E \setminus C$. 
\end{definition} 

\begin{definition}[Path coalitional games]
	For an unweighted directed/undirected graph, $G=(V \cup \{s,t\}, E)$,
	\begin{itemize}
		\item the corresponding \emph{Edge Path Coalitional Game (EPCG)} is a simple coalitional game $(N,v)$ such that $N=E$ and for any $C \subseteq N$, $v(C)=1$ if and only if $C$ admits an $s$-$t$ path.
		\item the corresponding \emph{Vertex Path Coalitional Game (VPCG)} is a simple coalitional game $(N,v)$ such that $N=V$ and for any $C \subseteq N$, $v(C)=1$ if and only if $C$ admits an $s$-$t$ path.
	\end{itemize}	
\end{definition}

Finally, we define the class of skill games, which were recently introduced by \citet{BaRo08a}. 
\begin{definition}[Coalitional skill games \cite{BaRo08a}]
A \emph{coalitional skill domain} is composed of players $N$, a set of tasks $T=\{t_1,\ldots, t_m\}$ and a set of skills $S = \{s_1,\ldots, s_k\}$. Each player $i$ has a set of skills $S(i)\subseteq S$, and each task $t_j$ requires a set of skills $S(t_j)\subseteq S$. The set of skills a coalition $C$ has is $S(C)=\bigcup_{i\in C}S(i)$. A coalition $C$ can perform task $t_j$ if $S(t_j)\subseteq S(C)$. The set of tasks a coalition $C$ can perform is $T(C)=\{t_j\mid S(t_j)\subseteq S(C)\}$. A \emph{task value function} is a monotonic function $u:2^T\rightarrow \mathbb{R}$.	
A \emph{coalitional skill game (CSG)} in a coalitional skill domain is a game $(N,v)$ such that for all $C\subseteq N$, $v(C)=u(t(C))$. 
A \emph{weighted task skill game (WTSG)} is a CSG where each task $t_j\in T$ has a weight $w_j\in \mathbb{R}^+$ and the task value function $u(T')=\sum_{j\mid t_j\in T'}w_j$. A threshold version of WTSG can be defined according to Definition~\ref{def:threshold}.
\end{definition}

\eat{
\begin{definition}[Linear Production game]
A \emph{linear production game}~\citep{Owen75a} consist of players $N$, $m$ resources and $p$ products. Each player is give n a vector $b^i=(b_1^i,\ldots, b_m^i)$ of resources. Each unit of product $j$ requires $a_{kj}$ units of the $k$th resource and can be sold at price $c_j$. The value of a coalition $C$ is the maximum profit it can obtain with all the resources possessed by its members. 
\end{definition}
}

\begin{definition}[Linear games \cite{taylorzwicker}]
On a coalitional game $(N,v)$, we define the \emph{desirability relation} $\succeq_D$ as follows:
we say that a player $i \in N$ is \emph{more desirable than} a player $j \in N$ ($i \succeq_D j$) if for all coalitions $C \in N \backslash \{i,j\}$
we have that $v(C \cup \{i\}) \geq v(C \cup \{j\})$. The relations $\succ_D$ (``strictly more desirable''), $\sim_D$ (``equally desirable''), and $\preceq_D$ and $\prec_D$ (``(strictly) less desirable'') are defined in the obvious fashion.
\emph{Linear games} are monotonic simple games with a complete desirability relation, i.e. every pair of players is comparable with respect to $\succeq_D$.
Weighted voting games form a strict subclass of linear games.
A linear game on players $N = \{1, \ldots, n\}$ is \emph{canonical} iff $\forall i,j \in N, i< j : i \succeq_D j$.
A \emph{right-shift} of a coalition $C$ is a coalition that can be obtained by a sequence of replacements of players in $C$ by less desirable players. A \emph{left-shift} of a coalition $C$ is defined analogously.
Canonical linear games can be represented by listing their \emph{shift-minimal winning coalitions}: minimal winning coalitions for which it holds that any right-shift is losing. Similarly they can be represented by listing their \emph{shift-maximal losing coalitions}, defined as obvious.
\end{definition}

\subsection{Problem definition}
We formally define \emph{coalition structures} and {\sc OptCS}.

\begin{definition}[Optimal coalition structure]
A \emph{coalition structure} for a game $(N,v)$ is a partition of $N$. The \emph{social welfare attained by} a coalition structure $\pi$, denoted $v(\pi)$ (we overload notation), is defined as $\sum_{C \in \pi} v(C)$. A coalition structure $\pi$ is \emph{optimal} when $v(\pi) \geq v(\pi')$ for every coalition structure $\pi'$.
\end{definition}

We consider the following standard computational problem in our paper.

\begin{definition}[Problem {\sc OptCS}]
For any class of coalitional games $X$, and its associated natural representation, the problem {\sc OptCS($X$)} is as follows: given a coalitional game $(N,v) \in X$, compute an optimal coalition structure. 
\end{definition}


\section{Games with fixed player types}
\label{sec-playertypes}
We study the problem of computing an optimal coalition structure for a game in the case that the number of \emph{player types} is fixed. 
\citet{SAK10a} considered player types and showed that some intractable problems become tractable when only dealing with a fixed number of player types. They did not address coalition structure generation in their paper.

\begin{definition}[Player type]
For a coalitional game $(N, v)$, we call two players $i,j \in N$ \emph{strategically equivalent} iff for every coalition $C \in N \backslash \{i,j\}$ it holds that $v(C \cup \{i\}) = v(C \cup \{j\})$.
When two players $i,j \in N$ are strategically equivalent, we say that $i$ and $j$ are of the same \emph{player type}.
\end{definition}

\begin{definition}[Valid type-partition] 
A \emph{valid type-partition} for a game $(N, v)$ is a partition $P$ of $N$ such that for each player set $C \in P$, all players in $C$ are of the same player type.
\end{definition}
Let {\sc OptCS($k$-types)} be the problem where the goal is to compute an optimal coalition structure for a coalitional game $(N,v)$, given as input a partition $P$ of $N$ with $|P| \leq k$ and the characteristic function $v$. Note that if all players are different, then  $|P|= n$. 
In general it is not easy to verify that a given partition for a simple game is a valid type-partition. But under the assumption that we are given a valid type-partition, and $v$ is easy to compute, it turns out that an optimal coalition structure can be computed in polynomial time.

\subsection{A general algorithm}

Now we will show that there exists a general polynomial-time algorithm to compute an optimal coalition structure for any  coalitional game when we are given a valid type-partition with a number of player types bounded by a constant. 
Our algorithm utilizes dynamic programming to compute an optimal coalition structure provided there are a constant number of player types.

\begin{theorem}\label{pro:Monotone-player-types}
There is a polynomial-time algorithm for {\sc OptCS($k$-types)}, provided that querying $v$ takes at most polynomial time, and the given input partition is a valid type-partition.
\end{theorem}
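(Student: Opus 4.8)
The plan is to exploit the fact that, since players of the same type are strategically equivalent, the value $v(C)$ of a coalition depends only on how many players of each type it contains. Let the given valid type-partition be $P = \{P_1,\ldots,P_k\}$ with $|P_j| = n_j$, and associate to each coalition $C$ its \emph{type profile} $\tau(C) = (|C\cap P_1|,\ldots,|C\cap P_k|) \in D := \prod_{j=1}^k \{0,1,\ldots,n_j\}$. The first step is a short exchange argument: if $i,j$ lie in the same class $P_\ell$ and $i\in C$, $j\notin C$, then writing $C' = C\setminus\{i\}$ we have $C' \subseteq N\setminus\{i,j\}$, so strategic equivalence of $i$ and $j$ gives $v(C) = v(C'\cup\{i\}) = v(C'\cup\{j\}) = v((C\setminus\{i\})\cup\{j\})$. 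Iterating such single-element swaps shows that any two coalitions with the same type profile have the same value; hence there is a well-defined function $\hat v$ on $D$ with $v(C) = \hat v(\tau(C))$, and $\hat v(\mathbf e)$ can be evaluated in polynomial time by querying $v$ on any representative coalition with profile $\mathbf e$.

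Next I would set up a dynamic program over type profiles. Note $|D| = \prod_{j=1}^k(n_j+1) \le (n+1)^k$, which is polynomial in $n$ for fixed $k$. For $\mathbf d\in D$ let $f(\mathbf d)$ be the maximum social welfare over all partitions of a player set with type profile $\mathbf d$. Then $f(\mathbf 0) = 0$ and, for $\mathbf d\neq\mathbf 0$,
\[
f(\mathbf d) \;=\; \max\bigl\{\, \hat v(\mathbf e) + f(\mathbf d - \mathbf e) \;:\; \mathbf 0 \neq \mathbf e \le \mathbf d \,\bigr\},
\]
where $\le$ is componentwise. The recurrence is well-founded because $\mathbf d - \mathbf e$ has strictly smaller coordinate sum than $\mathbf d$; it is correct because any partition of a profile-$\mathbf d$ player set contains at least one coalition, whose removal leaves a partition of the complementary profile, and conversely every such choice of first coalition plus partition of the remainder is a partition of the whole (that the same partition may be reached in several ways is harmless for a maximisation). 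The optimal value for the input game is $f(n_1,\ldots,n_k)$, and an actual optimal coalition structure is recovered by standard back-pointer bookkeeping: record at each $\mathbf d$ a maximising $\mathbf e$, then, using the concrete player sets $P_1,\ldots,P_k$, peel off arbitrary players in the prescribed per-type counts.

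Finally I would bound the running time: there are at most $(n+1)^k$ states, each maximisation ranges over at most $(n+1)^k$ sub-profiles $\mathbf e$, and the values $\hat v(\mathbf e)$ can be precomputed with one polynomial-time query to $v$ per profile, so the whole algorithm runs in time $O((n+1)^{2k})\cdot\mathrm{poly}(n)$, polynomial in $n$ for fixed $k$. The only genuinely delicate point is the first step — that the value is a function of the type profile — and even there the exchange argument is elementary; it does, however, crucially use the hypothesis that we are handed a \emph{valid} type-partition (so that strategic equivalence holds for every pair within a class) and that $v$ is efficiently queryable.
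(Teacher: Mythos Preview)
Your proposal is correct and follows essentially the same approach as the paper: reduce coalitions to their type profiles, precompute $\hat v$ on the $O((n{+}1)^k)$ profiles, and run a dynamic program over profiles with the recurrence $f(\mathbf d)=\max_{\mathbf 0\neq\mathbf e\le\mathbf d}\{\hat v(\mathbf e)+f(\mathbf d-\mathbf e)\}$, recovering the structure via back-pointers. If anything, your write-up is slightly more careful than the paper's: you spell out the exchange argument showing that $v$ depends only on the type profile (the paper merely asserts this), and you explicitly exclude $\mathbf e=\mathbf 0$ to make the recurrence well-founded.
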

\begin{proof}
Let $N = \{1, \ldots, n\}$ be the player set and $P = \{T_1, \ldots, T_k\}$ be the input type-partition.
We define \emph{coalition-types} as follows: for non-negative integers $t_1, \ldots, t_k$, the \emph{coalition-type} $T(t_1, \ldots, t_k)$ is the set of coalitions $\{C\ |\ \forall i \in \{1, \ldots, k\} : |C \cap T_i| = t_i\}$. In words, coalitions in coalition-type $T(t_1, \ldots, t_k)$ have $t_i$ players of type $T_i$, for $1 \leq i \leq k$. Note that $v$ maps all coalitions of the same coalition-type to the same value. 

First our algorithm computes a table $V$ of values for each coalition type. In order to do this we need to query $v$ at most $n^k$ times, since $1 \leq t_i \leq n$ for all $i$, $1 \leq i \leq k$. Let $\text{time}(v)$ denote the time it takes to query $v$, then computing $V$ takes $O(n^k \cdot $time$(v))$ time.

We proceed with a dynamic programming approach in order to find an optimal coalition structure: Let $f(a_1, \ldots, a_k)$ be the optimal social welfare attained by an optimal coalition structure on a game $(N', v)$ with $N' \in \{N'\ |\ \forall i \in \{1,\ldots, k\} : |N' \cap T_i| = a_i\}$. Note that it does not matter which $N'$ we choose from this set: the choice of $N'$ has no effect on the optimal social welfare since all $N'$ are of the same coalition-type. We are interested in computing $f(|T_1|, \ldots, |T_k|)$. By $\gamma(G)$, we signify those type-partitions which generate the same total utility as the empty set.

Since $v(\emptyset)=0$, the following recursive definition of $f(a_1, \ldots, a_k)$ follows:
\begin{equation}\label{eqn:dyna-fixed-types}
f(a_1, \ldots, a_k) = \begin{cases}
			0 \quad \text{ if } a_i = 0 \text{ for } 1 \leq i \leq k, \\ 
                         \max\{f(a_1 - b_1, \ldots, a_1 - b_k) + v(b_1, \ldots, b_k)\\
			 \qquad \ |\ \forall i \in \{1, \ldots, k\} : b_i \leq a_i \} \quad \text{ otherwise.}
                       \end{cases}
\end{equation}

The recursive definition of $f(a_1, \ldots, a_k)$ directly implies a dynamic programming algorithm. 
The dynamic programming approach works by filling in a $|T_1| \times \cdots \times |T_k|$ table $Q$, where the value of $f(a_1, \ldots a_k)$ is stored at entry $Q[a_1, \ldots, a_k]$. 
Once the table has been computed, $f(|T_1|, \ldots, |T_k|)$ is returned. 
The entries of $Q$ are filled in according to (\ref{eqn:dyna-fixed-types}). In order to utilize (\ref{eqn:dyna-fixed-types}), ``lower'' entries are filled in first, i.e.\ $Q[a_1, \ldots, a_k]$ is filled in before $Q[a_1', \ldots, a_k']$ if $a_i \leq a_i'$ for $1 \leq i \leq k$.
Evaluating (\ref{eqn:dyna-fixed-types}) then takes $O(n^k)$ time (due to the ``otherwise''-case of (\ref{eqn:dyna-fixed-types}), where the maximum of a set of at most $n^k$ elements needs to be computed). There are $O(n^k)$ entries to be computed, so the algorithm runs in $O(n^k \cdot $time$(v) + n^{2k})$ time. 

It is straightforward to extend this algorithm so that it (instead of outputting only the optimal social welfare) also computes and outputs an actual coalition structure that attains the optimal social welfare. 
To do so, maintain another table $|T_1| \times \cdots \times |T_k|$ table $R$. At each point in time that some entry of $Q$ is computed, say $Q[a_1, \ldots, a_k]$, now we also fill in $R[a_1, \ldots, a_k]$. $R[a_1, \ldots, a_k]$ contains a description of a set $\mathcal{C}$ of coalitions such that $\sum_{C \in \mathcal{C}} v(C) = f(a_1, \ldots, a_n)$ and $\bigcup \mathcal{C} \in T(a_1, \ldots, a_k)$. It suffices to describe $\mathcal{C}$ by simply listing the type of each $C \in \mathcal{C}$, and it is straightforward to verify that we can set $R(a_1, \ldots, a_k)$ to $\varnothing$ if $(a_1, \ldots, a_k) \in \gamma(G)$, and otherwise we set $R(a_1, \ldots, a_k)$ to $(P(a_1 - b_1, \ldots, a_1 - b_k), (b_1, \ldots, b_k))$, where $(b_1, \ldots, b_k)$ is the argument in the $\max$-expression of (\ref{eqn:dyna-fixed-types}).
\end{proof}

\subsection{Difficulty of finding types}

The polynomial-time algorithm given in Theorem \ref{pro:Monotone-player-types} relies on the promise that the type-partition given in the input is valid. A natural question is now whether it is also possible to efficiently compute the type-partition of a game in polynomial time when given only the weaker promise that the number of player types is constant $k$. We answer this question negatively. For randomized algorithms, we show high communication complexity is necessary, i.e.\ we show that an exponential amount of information is needed from the characteristic function $v$ when we are given no information on the structure of the characteristic function and we rely only on querying $v$. In fact, the theorem states that this is the case even when $v$ is simple and $k = 2$. It should be noted that this result also holds for deterministic algorithms, since they are a special case of randomized algorithms. Despite this negative result, we show in Section \ref{sec-apps} that we can do better for some subclasses of coalitional games, when we are provided information on the structure of function $v$.

\begin{theorem}\label{pro:type-partition-hard}
Any randomized algorithm that computes a player type-partition when given as input a monotonic simple game $(N,v)$ that has $2$ player types, requires at least $\Theta(\frac{2^n}{\sqrt{n}})$ queries to $v$.
\end{theorem}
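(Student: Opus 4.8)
The plan is to prove the lower bound by a standard averaging argument (Yao's principle) combined with an explicit ``hard'' family of instances. A randomized algorithm is a distribution over deterministic query algorithms (decision trees), each of which must be correct with probability at least $2/3$ on every monotonic simple game with two player types; averaging the success probability over a uniformly random instance drawn from a suitable hard family and swapping the order of expectation, some deterministic algorithm in the support is correct with probability at least $2/3$ on a random instance of that family, using no more queries than the randomized one in the worst case. So it suffices to exhibit a distribution over such games on which every deterministic query algorithm needs $\Omega(2^n/\sqrt n)$ queries to be correct with constant probability. Here I use that, under the promise of exactly two types, ``the type-partition'' is well-defined as the (unordered) pair of strategic-equivalence classes, so a correct algorithm must output precisely this pair; otherwise the all-singletons partition would be a trivial valid answer.

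Set $s=\lfloor n/2\rfloor$, and for each $S\subseteq N$ with $|S|=s$ define the game $v_S$ by $v_S(C)=1$ iff $|C|>n-s$, or $|C|=n-s$ and $C\neq N\setminus S$, and $v_S(C)=0$ otherwise; the distribution is uniform over these $\binom{n}{s}$ games. First I would check three properties. (i) $v_S$ is a monotonic simple game: $v_S(\varnothing)=0$ and $v_S(N)=1$ since $0<s<n$, and monotonicity holds because the value of a coalition depends only on its size except among coalitions of the single size $n-s$, where monotonicity imposes no constraint. (ii) $v_S$ has exactly two player types, $S$ and $N\setminus S$: the value $v_S(C)$ depends only on the pair $\bigl(|C\cap S|,\,|C\cap(N\setminus S)|\bigr)$, since the unique symmetry-breaking coalition $N\setminus S$ is characterized by $|C\cap S|=0$ together with $|C|=n-s$; hence any two players inside $S$ (resp.\ inside $N\setminus S$) are strategically equivalent, while for $i\in S$ and $j\in N\setminus S$ the coalition $C=(N\setminus S)\setminus\{j\}$ gives $v_S(C\cup\{j\})=0\neq 1=v_S(C\cup\{i\})$. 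Thus the type-partition of $v_S$ is $\{S,N\setminus S\}$. (iii) The answer to a query $C$ is independent of $S$ unless $|C|=n-s$, in which case it equals $0$ exactly when $C=N\setminus S$.

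Next, fix a deterministic algorithm making at most $q$ queries, viewed as a decision tree, and consider its \emph{canonical path}: always answer $1$ to a query $C$ with $|C|\ge n-s$ and $0$ to a query $C$ with $|C|<n-s$. This path queries a fixed sequence $D_1,\dots,D_q$ and ends at a fixed output. By (iii), for every $S$ with $N\setminus S\notin\{D_1,\dots,D_q\}$ the actual execution of the algorithm on $v_S$ follows exactly the canonical path, so it produces that same fixed output; a fixed partition can coincide with $\{S,N\setminus S\}$ for at most two values of $S$. Since there are at least $\binom{n}{s}-q$ subsets $S$ of size $s$ with $N\setminus S\notin\{D_1,\dots,D_q\}$, the probability that the algorithm is correct on a uniformly random instance is at most $q/\binom{n}{s}$ (the event $N\setminus S\in\{D_1,\dots,D_q\}$) plus $2/\binom{n}{s}$. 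For this to be a constant bounded away from $0$, we need $q=\Omega\bigl(\binom{n}{s}\bigr)$, and since $\binom{n}{\lfloor n/2\rfloor}=\Theta(2^n/\sqrt n)$ by Stirling's approximation, the claimed bound follows.

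I expect the main obstacle to be the simultaneous fulfilment of (ii) and (iii): the construction must be a genuine two-type instance (so that it is a legitimate input for the promised problem), yet it must reveal information about $S$ only through the single coalition $N\setminus S$, so that no adaptive sequence of queries can triangulate $S$ faster than essentially guessing it. The adaptivity of the algorithm, which might otherwise complicate the counting, is handled cleanly by the canonical-path device, which exploits the fact that until some query returns $0$ all answers are forced and hence $S$-independent. (The matching upper bound is trivial in spirit -- one can afford to probe all $\binom{n}{\lfloor n/2\rfloor}$ coalitions of the relevant size -- so the order of magnitude $\Theta(2^n/\sqrt n)$ is the right one.)
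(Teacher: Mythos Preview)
Your proposal is correct and follows essentially the same approach as the paper: both apply Yao's principle to a uniform distribution over games indexed by size-$\lfloor n/2\rfloor$ subsets, where each game is symmetric except for a single ``hidden'' coalition at the middle level, so that identifying the type-partition amounts to locating that one coalition among $\binom{n}{\lfloor n/2\rfloor}$ candidates.

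The differences are cosmetic but worth noting. The paper makes the hidden coalition the unique \emph{winning} coalition of size $n/2$ (and restricts to even $n$), whereas you make $N\setminus S$ the unique \emph{losing} coalition of size $n-s$; these are dual constructions. Your canonical-path device handles adaptivity more carefully than the paper, which argues somewhat informally that ``the query sequence of $A$ is the same among all instances up to querying the critical coalition'' and that the algorithm ``will have to'' query it. You also state the bound for Monte Carlo algorithms with constant success probability, which is formally stronger than the paper's Las Vegas expected-query bound, and you cover odd $n$. None of this changes the underlying idea.
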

\begin{proof}
We use Yao's minimax principle \cite{yaominmax}, which states that the expected cost of a randomized algorithm on a given problem's worst-case instances is at least the lowest expected cost among all deterministic algorithms that run on any fixed probability distribution over the problem instances.

Consider the following distribution over the input, where the player set is $N = \{1, \ldots, n\}$ and $n$ is even, the number of player types is always $k = 2$, and the given game $(N,v)$ is simple and monotonic. Valuation $v$ is drawn uniformly at random from the set $V = \{v_C\ |\ C \subset N, |C| = n/2 \}$ where in $v_C$, we call $C$ the \textit{critical coalition}. Function $v_C$ is specified as follows: 
\begin{itemize}
 \item $v_C(D) = 0$ when $|D| < n/2$;
 \item $v_C(D) = 1$ when $|D| > n/2$;
 \item $v_C(D) = 1$ when $D = C$, i.e.\ $D$ is the critical coalition;
 \item $v_C(D) = 0$ otherwise.
\end{itemize}

Observe that there are exactly two player types in any instance that has non-zero probability of being drawn under this distribution: when $v_C$ is drawn, the type-partition is $(C, N \backslash C)$. Also observe that for coalitions $C$ of size $\frac{n}{2}$, $v(C) = 1$ with probability $\frac{1}{\binom{n}{n/2}}$, because $v$ is drawn uniformly at random from $V$. 

Now let us consider an arbitrary deterministic algorithm $A$ that computes the type-partition for instances in this input distribution by queries to $v$. Let $C$ be the critical coalition of $n/2$ players such that $v(C) = 1$. $A$ will have to query $v(C)$ in order to know which characteristic function from $V$ has been drawn, and thus determine the type-partition correctly. Let ${Q}(v)$ be the sequence of queries to $v$ that $A$ generates. Let $\mathcal{Q}'(v)$ be the subsequence obtained by removing from $\mathcal{Q}(v)$ all queries $v(D)$ such that $|D| \not= n/2$ and all queries that occur after $v(C)$. Because $A$ is deterministic, the query sequence of $A$ is the same among all instances up to querying the critical coalition, since the critical coalitions are the only points in which the characteristic functions of $V$ differ from each other. Therefore the expected length of $\mathcal{Q}'(v)$ is $\binom{n}{n/2} / 2$. Because $A$ was chosen arbitrarily, we conclude that also the most efficient deterministic algorithm is expected to make at least $\binom{n}{n/2} / 2 = \Theta(\frac{2^n}{\sqrt{n}})$ queries to $v$, and the theorem now follows from Yao's principle.
\end{proof}

\citet{SAK10a} showed that checking whether two players are of the same type is $\mathsf{NP}$-hard for coalitional games defined by \citet{CoSa06a}. But the games are such that even computing the value of a coalition is $\mathsf{NP}$-hard. One can say something stronger.

\begin{proposition}\label{prop:coNPplayertypes}
There exists a representation of coalitional games for which checking whether two players are of the same type is $\mathsf{coNP}$-complete even if the value of each coalition can be computed in polynomial time.
\end{proposition}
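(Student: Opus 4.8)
The plan is to exhibit a concrete succinct representation of coalitional games in which (i) the value $v(C)$ of any coalition is polynomial-time computable, yet (ii) deciding whether two given players $i,j$ are strategically equivalent is $\mathsf{coNP}$-complete. Membership in $\mathsf{coNP}$ is immediate for any representation with polynomial-time-computable $v$: a ``no''-certificate is a coalition $C \subseteq N \setminus \{i,j\}$ together with the two values $v(C \cup \{i\})$ and $v(C \cup \{j\})$, which can be checked in polynomial time, so the complement (``the players differ'') is in $\mathsf{NP}$. The whole content is therefore the hardness direction.

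For hardness I would reduce from a standard $\mathsf{coNP}$-complete problem that is naturally phrased as a ``for all assignments'' statement about a Boolean formula — the obvious candidate is the complement of $\mathsf{SAT}$, i.e.\ \textsc{Tautology} (given a Boolean formula $\varphi$ on variables $x_1,\dots,x_m$, is $\varphi$ satisfied by every assignment?), or equivalently \textsc{Unsat}. Given $\varphi$ I would build a coalitional game whose player set contains two distinguished players $a$ and $b$ together with, for each variable $x_\ell$, a pair of ``literal'' players $p_\ell$ and $\bar p_\ell$. The idea is to have the game's characteristic function inspect a coalition $C$, read off a partial truth assignment from which of the $p_\ell,\bar p_\ell$ it contains (with some convention resolving the cases where both or neither appear), evaluate $\varphi$ on that assignment, and let the value of $C \cup \{a\}$ versus $C \cup \{b\}$ differ exactly when the induced assignment is a well-formed full assignment that falsifies $\varphi$. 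Then $a$ and $b$ are of the same player type iff no coalition encodes a falsifying assignment, i.e.\ iff $\varphi$ is a tautology. Since evaluating $\varphi$ on a given assignment and checking well-formedness of the encoding are polynomial-time tasks, $v$ is polynomial-time computable, and the reduction is clearly polynomial.

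The main obstacle, and the step that needs care, is making the game \emph{well-defined as a set function on all of $2^N$} while keeping the equivalence ``$a \sim_D b$ iff $\varphi$ is a tautology'' exact. One has to decide what $v$ does on coalitions that contain neither $a$ nor $b$, that contain both, that contain an inconsistent set of literal players ($p_\ell$ and $\bar p_\ell$ together), or that omit some variable entirely; all of these must be handled so that they never spuriously certify $a \not\sim_D b$, yet a genuine falsifying assignment still produces a witnessing coalition $C \subseteq N \setminus \{a,b\}$ with $v(C\cup\{a\}) \neq v(C\cup\{b\})$. A clean way is: for a coalition $D$, let $v(D)$ depend only on the literal players in $D$; if these encode a consistent full assignment $\alpha$ to all $m$ variables, set $v(D) = 1$ if $\varphi(\alpha)=1$, and if $\varphi(\alpha)=0$ set $v(D)=1$ exactly when $a \in D$ (and $=0$ otherwise); in all other (malformed) cases set $v(D)$ to a fixed value independent of whether $a,b\in D$, e.g.\ $0$. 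One then checks that the only coalitions $C$ avoiding both $a,b$ on which adding $a$ and adding $b$ give different values are precisely those encoding a falsifying assignment, so $a \sim_D b \iff \varphi$ is a tautology, completing the reduction. (If one wants the game to be monotonic, a bit more engineering with a cardinality threshold in the spirit of Definition~\ref{def:threshold} can be layered on, but monotonicity is not required by the statement.)
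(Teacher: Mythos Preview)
Your proposal is correct. Membership in $\mathsf{coNP}$ is argued exactly as the paper does, and your reduction from \textsc{Tautology} is sound: on your representation, $v$ is polynomial-time computable, and for any $C\subseteq N\setminus\{a,b\}$ the only way $v(C\cup\{a\})\neq v(C\cup\{b\})$ is that $C$ encodes a consistent full assignment falsifying $\varphi$, so $a$ and $b$ have the same type iff $\varphi$ is a tautology.

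The paper, however, takes a different and much shorter route. Instead of constructing an ad-hoc formula-based representation, it simply points to \emph{weighted voting games}: it is already known (Matsui and Matsui) that deciding whether two players in a WVG have the same Banzhaf index is $\mathsf{coNP}$-complete, and in a WVG two players have the same Banzhaf index iff they are of the same type. So the hardness direction is a one-line appeal to an existing result, and the witnessing representation is the standard weighted representation $[q;w_1,\dots,w_n]$. Compared with your approach, this buys two things: it requires no new construction, and it establishes the hardness for a natural, well-studied, monotonic simple class rather than for a bespoke (non-monotonic) encoding. What your approach buys is self-containedness: you do not depend on a cited $\mathsf{coNP}$-completeness result whose proof the reader may not know, and your argument makes the source of hardness (a hidden ``for all assignments'' quantifier) completely explicit.
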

\begin{proof}
A coalition $C \subseteq N \setminus \{i,j\}$ such that $v(C \cup \{i\}) \neq v(D \cup \{j\})$ is a polynomial-time certificate for membership in $\mathsf{coNP}$. Also, it is well known that checking whether two players in a WVG have the same Banzhaf index is $\mathsf{coNP}$-complete~\citep{MaMa00a}. Since two players in a WVG are of the same type if and only if they have same the Banzhaf index, we are done.
\end{proof}

\subsection{Applications of Theorem~\ref{pro:Monotone-player-types}}\label{sec-apps}
Theorem~\ref{pro:type-partition-hard} and Proposition \ref{prop:coNPplayertypes} indicate that finding player types is in general a difficult task.
Despite these negative results, Theorem~\ref{pro:Monotone-player-types} still applies to all classes of games and many natural settings where the type-partition is implicitly or explicitly evident:
\begin{corollary}\label{cor-wvg-values}
There exists a polynomial-time algorithm that solves {\sc OptCS(WVG)} in the following cases: 1.) in the input game (given in weighted form), the number of distinct weights is constant; 2.) in the input game (given in weighted form) the number of distinct weight vectors for the players is constant.
\end{corollary}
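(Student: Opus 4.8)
The plan is to reduce both cases to Theorem~\ref{pro:Monotone-player-types} by exhibiting, in each case, a valid type-partition of constant size that can be read off the input in polynomial time, and by noting that the characteristic function of a weighted voting game is evaluable in polynomial time (sum the weights of the coalition and compare with the quota~$q$). Once both ingredients are in place, Theorem~\ref{pro:Monotone-player-types} immediately yields the polynomial-time algorithm.

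For the first case, the key observation is that in a WVG $[q; w_1, \ldots, w_n]$ any two players $i,j$ with $w_i = w_j$ are strategically equivalent: for every $C \subseteq N \setminus \{i,j\}$ we have $\sum_{\ell \in C \cup \{i\}} w_\ell = \sum_{\ell \in C \cup \{j\}} w_\ell$, hence $v(C \cup \{i\}) = v(C \cup \{j\})$. Consequently, grouping the players according to the value of $w_i$ produces a partition $P$ of $N$ all of whose blocks consist of mutually strategically equivalent players, i.e.\ a valid type-partition, and $|P|$ equals the number of distinct weight values, which by hypothesis is a constant~$k$. This partition is computed in polynomial time (e.g.\ by sorting the weights), so Theorem~\ref{pro:Monotone-player-types} applies and solves {\sc OptCS} on the input game in polynomial time.

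The second case is handled in exactly the same way, except that now each player $i$ is labelled by its weight vector in the given weighted representation (for a multiple weighted voting game this is the tuple of weights $i$ receives in the component games), and the same short calculation shows that two players carrying identical weight vectors are strategically equivalent, since whether a coalition is winning depends only on the multiset of weight vectors it contains. Grouping players by weight vector thus again gives a valid type-partition whose number of blocks is the number of distinct weight vectors, assumed constant, and coalition values remain polynomial-time computable; Theorem~\ref{pro:Monotone-player-types} finishes the argument.

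I do not anticipate a real obstacle: the only thing requiring care is the (immediate) verification that equal weights, respectively equal weight vectors, force players into the same type, so that the partition handed to the algorithm is genuinely \emph{valid} in the sense Theorem~\ref{pro:Monotone-player-types} requires. Note that we neither need nor claim the converse—players of the same type may well carry different weights—which is precisely why this does not contradict Theorem~\ref{pro:type-partition-hard}: we are exploiting the explicit structure of the weighted representation to supply a valid type-partition for free, rather than attempting to recover the coarsest one from oracle access to~$v$.
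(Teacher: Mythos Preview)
Your proposal is correct and follows exactly the paper's own approach: observe that players with identical weights (respectively weight vectors) are strategically equivalent, partition accordingly to obtain a valid type-partition of constant size, and invoke Theorem~\ref{pro:Monotone-player-types}. Your additional remarks---that $v$ is polynomial-time evaluable and that the converse (same type $\Rightarrow$ same weight) is neither needed nor claimed---are accurate and nicely round out the argument.
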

\begin{proof}
When two players have the same weight (in the case of WVGs) or weight vectors (in the case of MWVGs), they are strategically equivalent. Therefore we can type-partition the players according to their weights and apply Theorem~\ref{pro:Monotone-player-types}.
\end{proof}

There exists a polynomial-time algorithm for computing the desirability classes, when given the list of shift-minimal winning coalitions of a linear game~\citep{Aziz08a}. This immediately yields the following corollary:
\begin{corollary}
In the following cases, there exists a polynomial-time algorithm that computes an optimal coalition structure for linear games with a constant number of desirability classes: 1.) the input game is represented as a list of (shift-)minimal winning coalitions; 2.) the input game is represented as a list of (shift-)maximal losing coalitions;
\end{corollary}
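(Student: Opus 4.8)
The plan is to reduce both cases to Theorem~\ref{pro:Monotone-player-types} by exhibiting, in polynomial time, a valid type-partition of bounded size together with a polynomial-time oracle for the characteristic function.

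First I would observe that in a linear game two players in the same desirability class are strategically equivalent: if $i \sim_D j$ then by definition $v(C \cup \{i\}) \geq v(C \cup \{j\})$ and $v(C \cup \{j\}) \geq v(C \cup \{i\})$ for every $C \subseteq N \setminus \{i,j\}$, hence $v(C \cup \{i\}) = v(C \cup \{j\})$. Consequently the partition of $N$ into its desirability classes is a valid type-partition, and since the game has only a constant number $k$ of desirability classes, this partition has at most $k$ cells. It therefore suffices to (a) compute this partition in polynomial time and (b) show that $v$ can be evaluated in polynomial time from the given representation; Theorem~\ref{pro:Monotone-player-types} then finishes the argument.

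For (a) in case~1, the partition into desirability classes is computed directly by the algorithm of \citet{Aziz08a} from the given list of (shift-)minimal winning coalitions. For (b) in case~1, I would use the standard shift-domination test: after putting the game in canonical form (using the order produced in step (a)), a coalition $C$ is winning if and only if some shift-minimal winning coalition is dominated by $C$ in the position-wise ordering of sorted player indices, which is checkable in time $O(n \cdot |W^{sm}|)$, hence polynomial.

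For case~2 I would pass to the dual game $v^*$ defined by $v^*(C) = 1 - v(N \setminus C)$. One checks that $v^*$ is again a (canonical) linear game with exactly the same desirability relation as $v$, and that complementation $C \mapsto N \setminus C$ is a bijection between the shift-maximal losing coalitions of $v$ and the shift-minimal winning coalitions of $v^*$; hence the latter list has polynomial size and is obtained in polynomial time. Applying case~1 to $v^*$ then yields the desirability classes of $v$ together with a polynomial-time oracle for $v^*$, and $v(C) = 1 - v^*(N \setminus C)$ gives a polynomial-time oracle for $v$. Feeding the resulting bounded valid type-partition and oracle into Theorem~\ref{pro:Monotone-player-types} completes the proof. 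The only mildly delicate points are verifying the correctness of the shift-domination membership test and the self-duality of the desirability relation under $v \mapsto v^*$; everything else is bookkeeping.
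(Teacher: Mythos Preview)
Your proposal is correct and follows the same core approach as the paper: observe that desirability classes form a valid type-partition and invoke Theorem~\ref{pro:Monotone-player-types}. The paper's own argument is a single sentence citing \citet{Aziz08a} for computing the desirability classes from the shift-minimal winning coalitions and then declaring the corollary immediate; it does not spell out the $v$-oracle or treat case~2 separately. Your explicit shift-domination membership test and your duality reduction for case~2 (noting that $v\mapsto v^*$ preserves the desirability relation and interchanges shift-maximal losing with shift-minimal winning coalitions) are correct and simply fill in details the paper leaves implicit.
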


\citet{BMJK10a} proved that {\sc OptCS(CSG)} is polynomial-time solvable if the number of tasks is constant and the `skill graph' has bounded tree-width. As a corollary of Theorem~\ref{pro:Monotone-player-types}, we obtain a complementing positive result which applies to all of the coalitional skill games defined in \citep{BaRo08a}. 

\begin{corollary}\label{cor:scsg-types}
There exists a polynomial-time algorithm that computes an optimal coalition structure for WTSGs and T-WTSGs with at most a fixed number of player types or a fixed number of skills.
\end{corollary}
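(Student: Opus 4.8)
The plan is to obtain both statements as direct applications of Theorem~\ref{pro:Monotone-player-types}: in each case I would exhibit a valid type-partition of $N$ with a constant number of parts that is computable in polynomial time, and then observe that the characteristic function of a (T-)WTSG can be queried in polynomial time, which are precisely the two hypotheses of the theorem. The structural fact I would establish first is that in any CSG a player's skill set determines her player type: if $S(i) = S(j)$ then for every coalition $C \subseteq N \setminus \{i,j\}$ we have $S(C \cup \{i\}) = S(C) \cup S(i) = S(C) \cup S(j) = S(C \cup \{j\})$, hence $T(C \cup \{i\}) = T(C \cup \{j\})$, and therefore $v(C \cup \{i\}) = u(T(C \cup \{i\})) = u(T(C \cup \{j\})) = v(C \cup \{j\})$ (this applies verbatim to T-WTSG, whose characteristic function is again a monotonic function of $T(C)$). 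Consequently, the partition obtained by grouping players with identical skill sets is always a valid type-partition; I would also note that one may first replace each $S(i)$ by $S(i) \cap \bigcup_{t_j \in T} S(t_j)$ without changing $v$, which can only coarsen this partition.

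For the case of a fixed number of skills $k = |S|$, the number of distinct skill sets is at most $2^k$, a constant, so the skill-set partition has at most $2^k$ parts and is computed in polynomial time simply by bucketing the players according to $S(i)$; we then invoke Theorem~\ref{pro:Monotone-player-types} with the constant $2^k$ in the role of the type bound. For the case of a fixed number of player types, the skill-set partition is again the object to feed to the theorem: since the skill set determines the type, the number of distinct skill sets occurring among the players equals the number of parts of this partition, and the games ``with a fixed number of player types'' in their natural representation are exactly those whose players fall into a constant number of skill classes, so once more we obtain a valid type-partition with $O(1)$ parts in polynomial time.

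It then only remains to verify the remaining hypothesis of Theorem~\ref{pro:Monotone-player-types}, namely that $v$ is polynomial-time computable: given $C$, one computes $S(C) = \bigcup_{i \in C} S(i)$, determines $T(C) = \{t_j : S(t_j) \subseteq S(C)\}$, and outputs $\sum_{j : t_j \in T(C)} w_j$ in the WTSG case or compares this sum with the threshold in the T-WTSG case; all of this is clearly polynomial. Applying Theorem~\ref{pro:Monotone-player-types} with the type-partition constructed above then yields the claimed algorithm. The main obstacle I anticipate is not any hard mathematics but the bookkeeping in the ``fixed number of player types'' reading — making precise that the (possibly coarser) skill-based partition actually handed to Theorem~\ref{pro:Monotone-player-types} has only a constant number of parts — whereas the ``fixed number of skills'' case and the efficiency of evaluating $v$ are routine.
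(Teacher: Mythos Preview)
Your treatment of the fixed-skills case is correct and is exactly the paper's argument: with $k'$ skills there are at most $2^{k'}$ distinct skill sets, hence at most $2^{k'}$ player types, and Theorem~\ref{pro:Monotone-player-types} applies once one checks that $v$ is polynomial-time computable for (T-)WTSGs, which you do carefully.

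For the fixed-player-types case, however, your argument has a real gap. You correctly observe that equal skill sets imply equal player type, so the skill-set partition is always a \emph{valid} type-partition. But this implication goes only one way: the skill-set partition \emph{refines} the true type partition and can have strictly more parts. Your sentence that games with a fixed number of player types are ``exactly those whose players fall into a constant number of skill classes'' is therefore false. A concrete counterexample: take a single task requiring $\{s_1,\ldots,s_n\}$ and players $1,\ldots,n$ with $S(i)=\{s_i\}$. For any $i\neq j$ and any $C\subseteq N\setminus\{i,j\}$, both $C\cup\{i\}$ and $C\cup\{j\}$ miss a required skill, so $v(C\cup\{i\})=v(C\cup\{j\})=0$; hence all players share one type, yet there are $n$ distinct (and task-relevant, so your preprocessing does not help) skill sets. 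Thus bounding the number of player types does not bound the size of the skill-set partition you propose to hand to Theorem~\ref{pro:Monotone-player-types}.

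The paper sidesteps this entirely: its proof argues only the fixed-skills case, and the fixed-player-types clause is read as a direct instantiation of Theorem~\ref{pro:Monotone-player-types} (where a valid type-partition is already part of the input of {\sc OptCS($k$-types)}), with the only additional observation needed being that $v$ is polynomial-time evaluable for (T-)WTSGs. You should handle that clause the same way rather than attempt to reconstruct the type-partition from the skill data.
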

\begin{proof}
Assume that there the number of skills is a constant $k'$. Then there is a maximum of $2^{k'}$ player types. A polynomial-time algorithm that computes an optimal coalition structure now follows from Theorem~\ref{pro:Monotone-player-types}.
\end{proof}

\section{Weighted voting games and simple games}
\label{sec-wvg-and-simple}
In this section, we examine weighted voting games (WVGs) and, more generally, simple games. Weighted voting games are coalitional games widely used in multiagent systems and AI. We have already seen that there exists a polynomial-time algorithm to compute an optimal coalition structure for WVGs with a constant number of weight values. We show that if the number of weight values is not a constant, then the problem becomes strongly $\mathsf{NP}$-hard. 

\begin{proposition}\label{pro:wvg-hard}
For a WVG, checking whether there is a coalition structure that attains social welfare $k$ or more is $\mathsf{NP}$-complete.
\end{proposition}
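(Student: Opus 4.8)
The plan is to reduce from a standard strongly $\mathsf{NP}$-complete partition-type problem, namely \textsc{3-Partition} (or \textsc{Bin-Packing} / \textsc{Set-Partition}), and to exploit the fact that a single coalition in a WVG contributes $1$ to the social welfare exactly when its total weight meets the quota. Membership in $\mathsf{NP}$ is immediate: a coalition structure is a polynomial-size certificate, and for each coalition we can check in polynomial time whether its weight sum reaches the quota, so we can verify that the social welfare is at least $k$.

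For hardness, I would take an instance of \textsc{3-Partition}: a multiset of $3m$ positive integers $a_1, \ldots, a_{3m}$ with $\sum_i a_i = mB$ and each $a_i$ strictly between $B/4$ and $B/2$, and ask whether the integers can be partitioned into $m$ triples each summing to exactly $B$. From this I construct the WVG $[B; a_1, \ldots, a_{3m}]$ and ask whether there is a coalition structure with social welfare $m$ or more. The key observations are: (i) because each weight lies in $(B/4, B/2)$, any coalition reaching the quota $B$ must contain at least three players, and because $3\cdot(B/4) < B < 4\cdot(B/4)$ such a winning coalition contains at most three players when the triple sums to exactly $B$ — more carefully, any winning coalition of size $\geq 4$ would use weight $> B$ and ``waste'' it; (ii) the total available weight is $mB$, so at most $m$ disjoint coalitions can each have weight $\geq B$; (iii) therefore a coalition structure of social welfare $m$ exists if and only if the players can be split into $m$ disjoint winning coalitions, which (by the size argument) forces exactly $m$ triples each of weight exactly $B$ — i.e.\ a yes-instance of \textsc{3-Partition}. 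Conversely, a valid \textsc{3-Partition} immediately yields $m$ winning triples and hence social welfare $m$. Since \textsc{3-Partition} is strongly $\mathsf{NP}$-complete and all weights and the quota in the constructed game are polynomially bounded in the input numbers (which in a strongly $\mathsf{NP}$-hard instance are themselves polynomially bounded), this also establishes strong $\mathsf{NP}$-hardness, matching the remark preceding the proposition.

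The main obstacle, and the step I would be most careful with, is pinning down the size/weight argument in observation (i)–(ii): I must argue cleanly that no coalition structure can ``beat the budget'' — that having some winning coalition of weight strictly greater than $B$ necessarily leaves too little weight for the remaining coalitions to also all be winning, so that social welfare $m$ is unattainable unless every winning coalition has weight exactly $B$, and hence exactly three members. A convenient way to make this airtight is: if a coalition structure has $\geq m$ winning coalitions, pick $m$ of them; their total weight is $\geq mB = \sum_i a_i$, so in fact they use all the weight and each has weight exactly $B$; the $(B/4, B/2)$ bounds then force each to have exactly three elements, giving the \textsc{3-Partition} solution. Everything else is routine; no macros beyond those already in the paper are needed.
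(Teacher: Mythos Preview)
Your proof is correct and follows essentially the same approach as the paper: reduce a partition-type problem to a WVG by setting the quota equal to the target part sum, so that achieving the maximal number of winning coalitions forces an exact partition. The paper phrases its reduction in terms of a generic $k$-\textsc{Partition} (partition into $k$ equal-sum parts) rather than the classical \textsc{3-Partition} into $m$ triples that you use, and it omits the explicit $\mathsf{NP}$-membership argument and the tight ``total weight $=mB$ forces each winning block to have weight exactly $B$'' step that you spell out; your version is therefore somewhat more careful, but the underlying idea is the same.
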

\begin{proof}
We prove this by a reduction from an instance of the classical $\mathsf{NP}$-hard {\sc Partition} problem to checking whether a coalition structure in a WVG gets social welfare at least $2$. 
An instance of the problem $k$-{\sc Partition} is a set of $n$ integer weights $A=\{a_1, \ldots, a_n \}$ and the question is whether it is possible to partition $A$, into $k$ subsets $P_1\subseteq A$,\dots $P_k\subseteq A$ such that $P_i\cap P_j=\varnothing$ and $\bigcup_{1\leq i\leq k}P_i=A$ and for all $i\in \{1,\ldots, k\}$, $\sum_{a_j\in A_i}a_j=\sum_{1\leq j\leq n}a_j/k$.

Without loss of generality, assume that $W=\sum_{a_i\in A}a_i$ is a multiple of $k$. Given an instance of $k$-{\sc Partition} $I=\{a_1, \ldots, a_k\}$, we can transform it to a WVG $v=[q;w_1,\ldots, w_k]$ where $w_i=a_i$ for all $i\in \{1,\ldots, k\}$ and $q=W/k$. Then the answer to $I$ is yes if and only if there exists a coalition structure $\pi$ for $v$ such that $v(\pi)=k$. 
\end{proof}

Since $3$-{\sc Partition} is strongly $\mathsf{NP}$-complete, it follows that {\sc OptCS(WVG)} is strongly $\mathsf{NP}$-hard. 
This is contrary to the other results concerning WVGs where computation becomes easy when the weights are encoded in unary~\citep{MaMa00a}. 
Note that any strongly $\mathsf{NP}$-hard optimization problem with a polynomially bounded objective function cannot have an $\mathsf{FPTAS}$ unless $\mathsf{P} = \mathsf{NP}$. 
Proposition~\ref{pro:wvg-hard} does not discourage us from seeking an approximation algorithm for WVGs. We show that there exists a 2-optimal polynomial-time approximation algorithm:
\begin{proposition}\label{pro:WVG-approx}
There exists a 2-optimal polynomial-time approximation algorithm for {\sc OptCS(WVG)}.
\end{proposition}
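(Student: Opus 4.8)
The plan is to recognise {\sc OptCS(WVG)} as essentially the \emph{bin covering} problem. Since a WVG is a simple game, the social welfare of a coalition structure $\pi$ is exactly the number of coalitions $C\in\pi$ with $\sum_{i\in C}w_i\ge q$, and any players not placed into such a coalition can be dumped into a single extra coalition (or, by monotonicity, merged into a winning one) without affecting the objective. So the problem is equivalent to selecting as many pairwise disjoint player sets of weight at least $q$ as possible, each ``covering'' the quota.

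First I would fix the following simple greedy algorithm. Put every player $i$ with $w_i\ge q$ into a coalition by itself; say there are $b$ such ``heavy'' players. Then take the remaining (``light'') players in any order and repeatedly accumulate them into a current coalition, closing it off and starting a new one as soon as its weight reaches $q$; let $g$ be the number of coalitions closed this way, and gather whatever remains into one final losing coalition. Output this partition; it has social welfare $b+g$ and is clearly computable in polynomial time.

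For the analysis, let $\mathrm{OPT}$ be the optimal social welfare and $W$ the total weight of the light players. Split the winning coalitions of an optimal structure into those containing at least one heavy player and those containing only light players: the first group has size at most $b$, since these coalitions are disjoint and hence use distinct heavy players, and the second group has size at most $\lfloor W/q\rfloor$, since those coalitions are disjoint and each has weight at least $q$. On the algorithm's side, every coalition closed during the greedy phase has weight strictly less than $2q$ (just before its last player was added it had weight $<q$, and that player is light), and the leftover coalition has weight $<q$, so $W<2qg+q$ and hence $\lfloor W/q\rfloor\le 2g$. Therefore $\mathrm{OPT}\le b+2g$, while the algorithm achieves $b+g\ge (b+2g)/2\ge \mathrm{OPT}/2$, which is the claimed guarantee.

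I expect the only real work to be careful bookkeeping: separating the heavy players up front (a single very heavy player still yields just one winning coalition, which is precisely what makes the two groups cleanly boundable), keeping the inequality $<2q$ strict so that $\lfloor W/q\rfloor\le 2g$ goes through, and dispatching the degenerate cases $g=0$, $b=0$, and $q=0$, all of which are immediate. That the factor $2$ cannot be improved is a separate matter, addressed by the accompanying inapproximability bound.
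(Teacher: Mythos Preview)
Your proof is correct and takes essentially the same approach as the paper: greedily accumulate players into coalitions, observe that each closed coalition has weight strictly below $2q$, and bound the optimum by the total weight divided by~$q$. The only cosmetic difference is that the paper disposes of heavy players via a one-line ``assume without loss of generality $w_i\le q$'' whereas you handle them explicitly as singletons; your bin-covering framing is apt, but the underlying algorithm and analysis are identical.
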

\begin{proof}
Consider the following algorithm: 
Let $[q; w_1, \ldots, w_n]$ be the input (so $N = \{ 1, \ldots, n \}$). We assume without loss of generality that $w_i \leq q$ for all $i$.
The algorithm first sets $p[0] := 0$, and then computes for some number $c$ the values $p[1], \ldots, p[c]$ using the rule 
\begin{equation}
p[i] := 
\begin{cases}
n \quad \textrm{  if  } \sum_{k = p[i-1] + 1}^{n} w_k < q ,\\
\min \{j\ |\ \sum_{k = p[i-1] + 1}^{j} w_k \geq q, (p[i-1]+1) \leq j \leq n \} \quad \textrm{ otherwise, }
\end{cases}
\end{equation}
where $c$ is taken such that $p[c] = n$.
The algorithm outputs the coalition structure $\{C_1, \ldots, C_c\}$, where for $1 \leq i \leq c$, $C_i = \{p[i-1]+1, \ldots, p[i]\}$.

Observe that the coalitions $C_1$ to $C_{c-1}$ are all winning and $C_c$ is not necessarily winning, so the value of the computed coalition structure is at least $c-1$
By our assumption, the total weight of any of the coalitions $C_1, \ldots, C_{c-1}$ is less than $2q$, and the total weight of $C_c$ is less than $q$.
Therefore, the total weight of $N$ is strictly less than $q(2c-1)$, so the optimal social welfare is at most $2c-2 = 2(c-1)$. This is two times the social welfare of the coalition structure computed by the algorithm.
\end{proof}
A tight example for the algorithm described in the proof of Theorem \ref{pro:WVG-approx} would be $[q; q-\epsilon, q-\epsilon, \epsilon, \epsilon]$, where $q$ is a fixed constant and $\epsilon$ is any positive real number strictly less than $q/2$. On this input, the algorithm outputs a coalition structure that attains a social welfare of 1, while the optimal social welfare is clearly 2. 
The following proposition shows that there does not exist a better polynomial-time approximation algorithm under the assumption that $\mathsf{P}\not=\mathsf{NP}$.

\begin{proposition}\label{pro:WVG-inapprox}
Unless $\mathsf{P}=\mathsf{NP}$, there exists no polynomial-time algorithm which computes an $\alpha$-optimal coalition structure for a WVG where $\alpha<2$.
\end{proposition}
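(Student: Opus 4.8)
The plan is to reuse the reduction from Proposition~\ref{pro:wvg-hard} and exploit the fact that the social welfare of any coalition structure of a simple game is a non-negative integer. First I would observe that in the WVG~$v=[q;w_1,\ldots,w_n]$ produced from a {\sc Partition} instance $A=\{a_1,\ldots,a_n\}$ (with $w_i=a_i$ and $q=W/2$, $W=\sum_i a_i$), the optimal social welfare is always either $1$ or $2$: the grand coalition~$N$ has total weight $W\geq q$, so $v(N)=1$ and hence the optimum is at least~$1$; and any coalition structure containing three or more winning coalitions would require total weight at least $3q=3W/2>W$, which is impossible, so the optimum is at most~$2$. Moreover, by the argument in Proposition~\ref{pro:wvg-hard}, the optimum equals~$2$ if and only if the {\sc Partition} instance is a ``yes''-instance.

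Next, suppose for contradiction that for some fixed $\alpha<2$ there is a polynomial-time algorithm~$\mathcal A$ that always outputs an $\alpha$-optimal coalition structure, i.e.\ one whose social welfare is at least $\mathrm{OPT}/\alpha$. I would run $\mathcal A$ on the WVG~$v$ obtained from a given {\sc Partition} instance and answer ``yes'' if and only if the returned coalition structure has social welfare~$2$. For correctness: if the {\sc Partition} instance is a ``yes''-instance then $\mathrm{OPT}=2$, so $\mathcal A$ returns a structure of value at least $2/\alpha>1$; since this value is an integer, it must be exactly~$2$, and the procedure correctly answers ``yes''. If the instance is a ``no''-instance then $\mathrm{OPT}=1$, so $\mathcal A$ returns a structure of value at most~$1<2$, and the procedure correctly answers ``no''. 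Thus {\sc Partition} would be decidable in polynomial time, contradicting $\mathsf P\neq\mathsf{NP}$.

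There is no real obstacle here beyond being careful about two points, which I would state explicitly: (i) the objective is integer-valued on simple games, which is what turns a strictly-better-than-$2$ multiplicative guarantee into an \emph{exact} solver on the ``yes''-instances; and (ii) the reduction instances are tailored so that $\mathrm{OPT}\in\{1,2\}$, ruling out the possibility that an approximate solution of large value coexists with an even larger optimum. Both are immediate from the construction in Proposition~\ref{pro:wvg-hard}, so the whole argument is short. One could also remark, as the paper already notes with the tight example $[q;q-\epsilon,q-\epsilon,\epsilon,\epsilon]$, that the bound $\alpha=2$ is therefore exactly the threshold of approximability for {\sc OptCS(WVG)}.
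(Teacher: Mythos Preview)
Your proposal is correct and follows essentially the same approach as the paper: reduce a {\sc Partition} instance to the WVG $[W/2;\,a_1,\ldots,a_n]$ and observe that a $(<2)$-approximation would decide the instance. You spell out two points the paper leaves implicit---that $\mathrm{OPT}\in\{1,2\}$ because the total weight is exactly $2q$, and that integrality of the objective forces any value strictly above~$1$ to equal~$2$---but the underlying reduction and logic are identical.
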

\begin{proof}
We would be able to solve the $\mathsf{NP}$-complete problem {\sc Partition} in polynomial time if there existed a ($<2$)-optimal polynomial-time approximation algorithm for {\sc OptCS(WVG)}. We could reduce a partition instance $(w_1, \ldots, w_n)$ to a weighted voting game $[q; w_1, \ldots, w_n]$ where $q = \frac{\sum_{i=1}^n w_n}{2}.$ 
Because the sum of all weights of the players is $2q$, a ($<2$)-optimal approximation algorithm would output an optimal coalition structure when provided with this instance. The output coalition structure directly corresponds to a solution of the original {\sc Partition} instance, in case it exists. Otherwise, the social welfare attained by the output coalition structure is 1.
\end{proof}

Simple games that are not necessarily weighted, and are represented by the list of minimal winning coalitions, are even harder to approximate.

\begin{proposition}\label{pro:wm-hard}
{\sc OptCS(MWC)}, i.e.\ {\sc OptCS} for simple games represented as a list of minimal winning coalitions, cannot be approximated within any constant factor unless $\mathsf{P}=\mathsf{NP}$.
\end{proposition}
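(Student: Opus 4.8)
The plan is to observe that {\sc OptCS(MWC)} is, up to renaming, the {\sc Maximum Set Packing} problem, and then to reduce {\sc Maximum Independent Set} to it in an optimum-preserving way. The starting point is the following identity: for a simple game given by its set $W^m$ of minimal winning coalitions, the optimal social welfare over all coalition structures equals the maximum number of pairwise disjoint members of $W^m$. Indeed, in any coalition structure each winning block contains some MWC, and picking one MWC per winning block yields that many pairwise disjoint MWCs; conversely, given pairwise disjoint $M_1,\dots,M_\ell \in W^m$, the partition $M_1,\dots,M_{\ell-1}, N\setminus(M_1\cup\cdots\cup M_{\ell-1})$ has $\ell$ winning blocks, since the last block contains $M_\ell$ and is therefore winning. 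So an optimal coalition structure is exactly a maximum-size subfamily of $W^m$ consisting of pairwise disjoint sets (the uncovered players being dumped into one of the blocks for free).

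Next I would build the reduction from an instance $G=(V,E)$ of {\sc Maximum Independent Set}. Take as players $N = E \cup \{d_v \mid v\in V\}$, where the $d_v$ are fresh pairwise distinct ``dummy'' players, and set $W^m = \{S_v \mid v\in V\}$ with $S_v = \{d_v\} \cup \{e\in E \mid v\in e\}$. The dummies ensure that $W^m$ is an antichain, so this is a legal representation of a monotone simple game of size polynomial in $|G|$, and $v(\varnothing)=0$, $v(N)=1$ hold trivially (the case $V=\varnothing$ being immediate). Since $d_v$ occurs only in $S_v$, for $u\ne v$ we get $S_u\cap S_v = \{\{u,v\}\}$ when $\{u,v\}\in E$ and $S_u\cap S_v=\varnothing$ otherwise; hence a subfamily $\{S_v \mid v\in U\}$ is pairwise disjoint precisely when $U$ is an independent set in $G$, and it has cardinality $|U|$. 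Combined with the identity of the previous paragraph, the optimal social welfare of the constructed game equals $\alpha(G)$, the independence number of $G$.

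Finally I would transfer the inapproximability. The reduction is exact, $\mathrm{OPT}=\alpha(G)$, and from any coalition structure with $k$ winning blocks one reads off in polynomial time (via the first paragraph) an independent set of size $k$ in $G$. Hence a polynomial-time $\alpha$-approximation algorithm for {\sc OptCS(MWC)} would give a polynomial-time $\alpha$-approximation algorithm for {\sc Maximum Independent Set}. Since the latter (equivalently {\sc Maximum Clique}) has no polynomial-time constant-factor approximation unless $\mathsf{P}=\mathsf{NP}$ --- a consequence of the PCP theorem together with the standard self-improvement/graph-product amplification --- the same conclusion holds for {\sc OptCS(MWC)}.

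The only genuinely delicate point, and the step I would be most careful to verify, is that the family $\{S_v\}$ is actually a valid minimal-winning-coalition representation, i.e.\ an antichain with $v(N)=1$ and $v(\varnothing)=0$; this is exactly why the dummy players $d_v$ are introduced, and without them the construction would fail for graphs with degree-one vertices. Everything else --- both directions of the identity $\mathrm{OPT}=\alpha(G)$ and the disjointness computation $S_u\cap S_v$ --- is a routine check.
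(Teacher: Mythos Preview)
Your proof is correct and follows essentially the same route as the paper: both reduce from \textsc{MaxClique}/\textsc{MaxIndependentSet} by introducing one minimal winning coalition per vertex so that disjointness of two MWCs encodes (non)adjacency in the graph, and then invoke the constant-factor inapproximability of clique/independent set. The only difference is cosmetic---the paper takes all $2$-subsets of $V$ as players and sets $C_i=\{\{i,j\}:\{i,j\}\notin E\}$ (disjointness $\Leftrightarrow$ adjacency, yielding cliques), whereas you take $N=E\cup\{d_v:v\in V\}$ and $S_v=\{d_v\}\cup\delta(v)$ (disjointness $\Leftrightarrow$ non-adjacency, yielding independent sets); your dummy players make the antichain property explicit, a point the paper's write-up leaves implicit.
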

\begin{proof}
	This can be proved by a reduction from an instance of the classical $\mathsf{NP}$-hard maximum clique ({\sc MaxClique}) problem. It is known that {\sc MaxClique} cannot be approximated within any constant factor~\citep{MaGa00a}.

Consider the instance $I$ of {\sc MaxClique} represented by an undirected graph $G_I=(V,E)$. Transform $I$ into instance $I' = (N, W^m)$ of {\sc OptCS(MWC)} in the following way. Define $N = \{\{v,v'\} : v \in V, v' \in V\}$ to be all subsets of $V$ of cardinality 2. Next, set $W^m=\{C_i:i\in V\}$, and for all $i \in V$ define $C_i = \{\{i,j\}\ |\ \{i,j\} \not\in E\}$. Now two coalitions $C_i$ and $C_j$ are disjoint if and only if $\{i,j\} \in E$. Then the maximum clique size is greater than or equal to $k$ if and only if there is a coalition structure for $(N, W^m)$ that attains social welfare $k$. Now assume that there exists a polynomial-time algorithm which computes a coalition structure $\pi$ which gets social welfare within a constant factor $\alpha$ of the maximum possible social welfare $k$. Then we can use $\pi$ to get a constant-factor approximation solution to instance $I$ in polynomial time in the following way. Consider the set of vertices $\{i:C_i\in \pi\}$. Since for $C_i, C_j\in \pi$, $C_i$ and $C_j$ are disjoint, then we know that $(i,j)\in E$. Therefore the vertices $\{i:C_i\in \pi\}$ form a clique of size $k/\alpha$. 
\end{proof}

\section{Games on graphs}\label{sec-graph-games}
Numerous classes of coalitional games are based on graphs. We characterize the complexity of {\sc OptCS} for many of these classes in the section. We first turn our attention to one such class for which the computation of cooperative game solutions is well studied~\citep{DePa94a}. We see that that {\sc OptCS} is computationally hard in general for graph games:

\begin{proposition}\label{pro:GG-hard}
For the general class of graph games GG, the problem {\sc OptCS} is strongly $\mathsf{NP}$-hard.
\end{proposition}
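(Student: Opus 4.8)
The plan is to prove strong $\mathsf{NP}$-hardness of {\sc OptCS(GG)} by a reduction from a strongly $\mathsf{NP}$-hard graph problem in which negative edge weights let us encode a constraint. The natural candidate is {\sc Max-Cut}, or more conveniently its cardinality-constrained / balanced variants, but the cleanest route is to exploit that graph games with negative weights behave like a partition-into-cliques-with-penalties problem. Concretely, given an instance $G = (V,E)$ of a strongly $\mathsf{NP}$-hard problem such as {\sc Clique Cover} or (better, since it is strongly $\mathsf{NP}$-hard and works with simple gadgets) {\sc Max-Cut} / {\sc Minimum Bisection}, I would build a weighted graph $G' = (V', E', w)$ so that the social welfare $v(\pi) = \sum_{C \in \pi} w(C)$, where $w(C)$ is the total weight of edges inside $C$, is maximized exactly by coalition structures that correspond to the desired combinatorial object in $G$. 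Since $v(\pi)$ decomposes as a sum over the blocks of the partition, and each block contributes the weight of its induced edge set, maximizing $v(\pi)$ over all partitions is the same as choosing a set of vertex-disjoint vertex subsets (the blocks) covering $V'$ whose induced edge weights sum to a maximum — negative edges let us force or forbid certain vertices from being grouped together.

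First I would fix the source problem; I expect {\sc Simple Max-Cut} (strongly $\mathsf{NP}$-hard, all weights $1$) to be the most convenient, reducing it to the complementary ``group vertices to maximize monochromatic weight'' phrasing. Given $G=(V,E)$, put each original edge in $G'$ with a large negative weight $-M$ (with $M > |E|$), and for every non-edge pair add weight $0$ (i.e.\ omit it); then any coalition structure that puts two adjacent vertices together incurs a penalty, so an optimal $\pi$ is forced to be an independent-set partition, i.e.\ a proper colouring — but this only tests colourability, not a maximum. To get a genuine optimization/threshold reduction I would instead reduce from a problem like {\sc Densest-$k$-Subgraph} on unweighted graphs or, most robustly, directly from {\sc 3-Partition}-style gadgets: add, for each pair of vertices we want to reward being together, a $+1$ edge, and for each pair we want to separate, a $-M$ edge, and argue the optimal welfare equals a target value iff the instance is a yes-instance. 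The precise gadget depends on which strongly $\mathsf{NP}$-hard problem I pick, so the write-up would commit to one (I lean toward a reduction from {\sc Minimum Clique Partition} or from {\sc Max-Cut} via the identity "max-cut weight $= $ total weight $-$ max monochromatic weight", making the GG$^+$/GG distinction essential: negative weights, or equivalently a large additive penalty, are exactly what let GG — unlike GG$^+$ — simulate separation constraints).

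The two things to verify carefully are: (1) \emph{correctness of the reduction} — that an optimal coalition structure of $G'$ has social welfare at least the target $k$ if and only if the source instance is a yes-instance; the forward direction reads off a partition from the cut/colouring, and the backward direction argues that any $\pi$ exceeding the penalty threshold cannot contain a negative ($-M$) edge inside a block (else its welfare is below $|E| - M < 0 \le$ trivial singleton solution), hence decomposes into the required structure; and (2) \emph{strong} $\mathsf{NP}$-hardness — i.e.\ that all numbers produced ($M$, $k$, the weights) are polynomially bounded in the input size, which holds since $M = |E|+1$ and $k \le |E|$, so the reduction is from a strongly $\mathsf{NP}$-hard problem using only polynomially bounded weights. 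The main obstacle I anticipate is step (1)'s backward direction: ruling out "clever" coalition structures that mix small positive contributions against the big negative penalty in a way that still meets the threshold — this is handled by making the penalty $M$ dominate the total positive weight, so that any block containing a forbidden pair drags the whole social welfare below what the all-singletons (or any penalty-free) structure already achieves, but the bookkeeping has to be done cleanly to make the ``iff'' exact.
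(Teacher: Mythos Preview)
Your high-level instinct --- reduce from \textsc{Max-Cut} and exploit negative edge weights --- matches the paper's approach, but your proposal has a genuine gap at the point where you try to turn that instinct into a working reduction. You correctly observe that simply putting a large negative weight $-M$ on each original edge does not encode \textsc{Max-Cut}: with all edges negative, the all-singletons partition has social welfare $0$ and dominates everything, so the reduction degenerates. Your fallback, the identity ``max-cut weight $=$ total weight $-$ max monochromatic weight'', runs into the same wall: in a graph game the partition may have \emph{any} number of blocks, so ``maximize monochromatic weight over all partitions'' is not the complement of \textsc{Max-Cut} unless you somehow force the optimal partition to have exactly two parts. Your sketch never supplies a gadget that does this, and the vaguer alternatives you list (\textsc{Minimum Clique Partition}, \textsc{3-Partition}-style rewards) do not obviously fill the hole either --- for instance, graph-game welfare does not count the number of blocks, so encoding a \emph{minimum} clique cover is not straightforward.

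The paper closes exactly this gap with a concrete gadget you are missing: to the \textsc{Max-Cut} instance $G=(V,E,w)$ it adds two anchor vertices $x_1,x_2$, gives every original edge weight $-w(a,b)$, gives each edge $\{x_\ell,i\}$ for $i\in V$ a large positive weight $W+1$ (where $W=\sum_e w(e)$), and puts a huge negative weight $-(|V|+1)W$ on $\{x_1,x_2\}$. The negative $x_1$--$x_2$ edge forces the anchors into different blocks; the large positive anchor--vertex edges then force every $i\in V$ to join the block of one anchor (splitting off any original vertex loses more than it could possibly gain). Hence every optimal coalition structure has the form $\{\{x_1\}\cup A,\;\{x_2\}\cup B\}$ with $(A,B)$ a bipartition of $V$, and among those the negated original weights make the optimum coincide with a maximum cut. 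All weights are polynomially bounded in $|V|$ and $W$, so strong $\mathsf{NP}$-hardness carries over from \textsc{Max-Cut}. This anchor trick is the missing idea; once you have it, the forward/backward verification you outline goes through cleanly.
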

\begin{proof}
We prove by presenting a reduction from the strongly $\mathsf{NP}$-hard problem {\sc MaxCut}. Consider an instance $I$ of  {\sc MaxCut} with a connected undirected graph $G = (V,E,w)$ and non-negative weights $w(i,j)$ for each edge $(i,j)$. Let $W=\sum_{(i,j)\in E} w(i,j)$ and define $P(i)$ as the vertices on the same side as as vertex $i$. We show that if there is a polynomial-time algorithm which computes an optimal coalition structure, then we have a polynomial-time algorithm for {\sc MaxCut}. There exists a polynomial-time reduction that reduces $I$ to an instance $I'=(V',E',w')$ of {\sc OptCS} for graph games where $V'=V\cup\{x_1,x_2\}$ and $E'= E\cup\{\{x_1,i\}:i\in N\}\cup \{\{x_2,i\}:i\in N\}\cup \{\{x_1,x_2\}\}$. The weight function $w'$ is defined as follows: $w'(a,b)=-w(a,b)$ if $a, b\in V$, $w'(a,b)=W+1$ if $a\in \{x_1,x_2\}$ and $b\in V$, $w'(a,b)=-(|V|+1)W$ if $a=x_1$ and $b=x_2$.
	
We now show that a solution to instance $I'$ of {\sc OptCS(GG)} can be be used to solve instance $I$ of {\sc MaxCut}. Assume that $\pi'$ is an optimal coalition structure for $I'$. Then we know that $\pi$ is of the form $\{\{x_1, A'\}, \{x_2, B'\}\}$ where $(A',B')$ is a partition of $V$. We also know that $\sum_{a \notin \pi'(b)}w'(a,b)$ is minimized in $\pi'$. Therefore, we have a corresponding partition $\pi$ of $V$ such that $\sum_{a\notin \pi(b)}w(a,b)$ is maximized.   
\end{proof}

\begin{observation}\label{obs:gg+-easy}
It is clear that for GG$^{+}$, the coalition structure containing only the grand coalition is the optimal coalition structure.
\end{observation}

We now present some positive results concerning {\sc OptCS} for other games on graphs:

\begin{proposition}\label{pro:SCG-easy}
{\sc OptCS}(SCG) can be solved in polynomial time. 
\end{proposition}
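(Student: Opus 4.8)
The plan is to reduce {\sc OptCS}(SCG) to the problem of packing edge-disjoint spanning trees, which is solvable in polynomial time by matroid-union techniques. First I would unpack the definitions: since $N=E$ and a coalition is winning exactly when it contains (the edge set of) a spanning tree of $G$, the social welfare $v(\pi)$ of a coalition structure $\pi$ — which is a partition of the edge set $E$ — equals the number of parts of $\pi$ that contain a spanning subtree of $G$.

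Next I would show that the optimal social welfare equals $k^\ast$, the maximum number of pairwise edge-disjoint spanning trees of $G$. For $v(\pi)\le k^\ast$: if $m$ parts of $\pi$ are winning, choose one spanning tree out of each such part; because the parts of $\pi$ are pairwise disjoint subsets of $E$, these $m$ spanning trees are edge-disjoint, whence $m\le k^\ast$. For the reverse direction: given edge-disjoint spanning trees $T_1,\dots,T_{k^\ast}$ (and noting $k^\ast\ge 1$, since $G$ is connected by definition of an SCG), the partition $\{T_1,\dots,T_{k^\ast-1},\,T_{k^\ast}\cup(E\setminus\bigcup_i T_i)\}$ is a coalition structure with exactly $k^\ast$ winning parts, so the optimum is at least $k^\ast$.

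Then I would invoke the classical fact — Nash-Williams and Tutte, with the algorithmic version obtained via matroid union (due to Edmonds) — that $k^\ast$, together with an explicit collection of $k^\ast$ edge-disjoint spanning trees, is computable in polynomial time: one runs a forest-packing / matroid-union routine on copies of the graphic matroid of $G$, increasing the number of copies until $|V|-1$ further independent elements can no longer be added. From the returned trees one outputs the coalition structure described in the previous paragraph, which by the equivalence above is optimal.

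The construction is essentially immediate once the reduction is in place; the only points needing care are absorbing the leftover edges $E\setminus\bigcup_i T_i$ into one winning part (so that $\pi$ genuinely partitions $E$ rather than merely covering a subset) and ruling out the degenerate case $k^\ast=0$, which is handled by connectedness of $G$. The main external ingredient, namely polynomial-time computation of a maximum edge-disjoint spanning-tree packing, is standard, so I expect no real obstacle beyond stating it and citing it correctly.
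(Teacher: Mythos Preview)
Your proposal is correct and follows essentially the same approach as the paper: both reduce {\sc OptCS}(SCG) to computing a maximum packing of edge-disjoint spanning trees and then invoke a known polynomial-time algorithm for that problem. The only differences are cosmetic --- you spell out the two directions of the equivalence and the handling of leftover edges more carefully than the paper does, and you cite matroid-union/Edmonds where the paper cites Roskind--Tarjan's $O(m^2)$ algorithm.
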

\begin{proof}
For a SCG, {\sc OptCS} is equivalent to computing the maximum number of edge disjoint spanning subgraphs. Clearly, the maximum number of edge disjoint spanning trees is greater than or equal to the maximum number of spanning subgraphs. Since the spanning trees are also spanning subgraphs, the problem reduces to computing the maximum number of disjoint spanning trees. The problem is solvable in $O(m^2)$~\citep{RoTa85a}.
\end{proof}

\begin{proposition}\label{pro:EPCG-easy}
For EPCGs and VPCGs, {\sc OptCS} can be solved in polynomial time.
\end{proposition}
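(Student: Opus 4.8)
The plan is to reduce {\sc OptCS(EPCG)} and {\sc OptCS(VPCG)} to the computation of a maximum collection of (internally) disjoint $s$-$t$ paths, a classical Menger-type quantity that is computable in polynomial time by network flow; this mirrors the proof of Proposition~\ref{pro:SCG-easy}, where {\sc OptCS(SCG)} was identified with the maximum number of edge-disjoint spanning trees.

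First I would handle EPCG. The claim is that the optimal social welfare equals the maximum number $k$ of pairwise edge-disjoint $s$-$t$ paths of $G$. For one direction, given $k$ edge-disjoint $s$-$t$ paths $E_1,\ldots,E_k \subseteq E$, the coalition structure consisting of $E_1,\ldots,E_k$ together with one extra coalition holding all remaining edges has social welfare at least $k$, since each $E_i$ admits an $s$-$t$ path. For the converse, in any coalition structure $\pi$ every winning coalition $C\in\pi$ contains, among its edges, an $s$-$t$ path; as the coalitions of $\pi$ are pairwise disjoint, the paths so selected are pairwise edge-disjoint, whence $v(\pi)\leq k$. Thus the optimum value is exactly $k$, which equals the value of a maximum $s$-$t$ flow in $G$ with all edge capacities equal to $1$ (in the undirected case one uses the standard undirected max-flow formulation, or replaces each edge by two oppositely oriented arcs). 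This is polynomial-time computable, and an integral flow decomposition produces the $k$ paths, hence an explicit optimal coalition structure (padding with one value-$0$ coalition of leftover edges).

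The argument for VPCG is the same with ``edge-disjoint'' replaced by ``internally vertex-disjoint'': the optimal social welfare equals the maximum number of internally vertex-disjoint $s$-$t$ paths, which by Menger's theorem equals the minimum $s$-$t$ vertex cut and is computed in polynomial time via the usual vertex-splitting reduction --- split each $v\in V$ into $v_{\mathrm{in}}$ and $v_{\mathrm{out}}$ joined by a unit-capacity arc, give the original arcs infinite capacity, and take a maximum $s$-$t$ flow. A flow decomposition again yields an explicit optimal coalition structure.

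There is no real obstacle here. The only points needing care are (i) the direction showing $v(\pi)\leq k$, where one must note that disjointness of the coalitions forces the extracted $s$-$t$ paths to be (internally) disjoint, and (ii) the directed-versus-undirected distinction together with degenerate cases such as $s$ and $t$ being adjacent, which affect only low-order bookkeeping.
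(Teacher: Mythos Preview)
Your proposal is correct and follows essentially the same approach as the paper: both identify {\sc OptCS} for EPCGs (resp.\ VPCGs) with computing a maximum family of edge-disjoint (resp.\ internally vertex-disjoint) $s$-$t$ paths via unit-capacity max-flow and the standard vertex-splitting reduction. Your write-up is in fact more careful than the paper's, as you explicitly justify both directions of the equivalence and note that flow decomposition yields the optimal coalition structure itself.
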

\begin{proof}
The problems are equivalent to computing the maximum number of edge disjoint and vertex disjoint $s$-$t$ paths respectively.
There are well-known algorithms to compute them. For example, the maximum number of edge-disjoint $s$-$t$ paths is equal to the max flow value of the graph in which each edge has unit capacity. 
The problem of maximizing the number of of vertex disjoint paths can be reduced to maximizing the number of of vertex disjoint paths in the following way: duplicate each vertex (apart from $s$ and $t$) with one getting all ingoing edges, and the other getting all the outgoing edges, and an internal edge between them with the node weight as the edge weight.
\end{proof}

\begin{proposition}\label{pro:NFG-matching-easy}
The coalition structure containing only the grand coalition is an optimal coalition structure for: 1.) NFGs and 2.) Matching games.
\end{proposition}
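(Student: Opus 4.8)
The plan is to handle both classes with the same ``solution-merging'' argument. Fix an arbitrary coalition structure $\pi=\{C_1,\ldots,C_\ell\}$; inside each $C_i$ pick a value-optimal witness (a maximum flow for NFGs, a maximum-weight matching for matching games); observe that these witnesses live on pairwise disjoint resources because $\pi$ is a partition; and glue them into a single feasible witness for $N$ whose value equals the sum of the pieces. Since $v$ is non-negative in both classes, this yields $v(N)\ge\sum_{i=1}^\ell v(C_i)=v(\pi)$ for every $\pi$, so $\{N\}$ is optimal. The grand-coalition value $v(N)$ itself is of course attained by $\{N\}$, which closes the argument.

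For NFGs, recall $N=E$. For each $i$ I would take a maximum flow $f_i:E\to\mathbb{R}^+$ with $f_i(e)=0$ for $e\notin C_i$, so the value of $f_i$ equals $v(C_i)$, and set $f:=\sum_{i=1}^\ell f_i$ (pointwise). Because the $C_i$ partition $E$, each edge lies in exactly one $C_i$, so $f(e)=f_i(e)\le c(e)$ for that $i$ and $f\ge 0$; flow conservation at every internal vertex holds since it is an additive linear condition satisfied by each $f_i$. Hence $f$ is a feasible flow supported on $\bigcup_i C_i=E$, and since the value of a flow (net outflow at $s$) is additive in $f$, we get $v(E)\ge \mathrm{val}(f)=\sum_i \mathrm{val}(f_i)=\sum_i v(C_i)$.

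For matching games, recall $N=V$. For each $i$ I would take a maximum-weight matching $M_i$ in the subgraph induced by $C_i$, so $w(M_i)=v(C_i)$ (and $w(M_i)\ge 0$, since the empty matching is always available). As the vertex sets $C_i$ are pairwise disjoint, no edge of $M_i$ can share an endpoint with an edge of $M_j$ for $i\ne j$, and within each $M_i$ the edges are already vertex-disjoint; thus $M:=\bigcup_i M_i$ is a matching in $G$. Since $w$ is additive over disjoint edge sets, $v(V)\ge w(M)=\sum_i w(M_i)=\sum_i v(C_i)$.

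I do not expect a genuine obstacle here: the only points that need care are the feasibility of the glued object (capacity and conservation for $f$; the matching property for $M$) and the additivity of the objective, and all of these follow immediately from the disjointness of the classes of $\pi$. The one thing worth stating explicitly is that $v\ge 0$ on both classes, so that passing from a coalition structure to its merged witness never loses value.
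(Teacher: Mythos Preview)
Your proof is correct and follows essentially the same ``solution-merging'' approach as the paper: pick an optimal witness (flow, respectively matching) inside each block of $\pi$, observe these witnesses are supported on disjoint edges/vertices, and combine them into a single feasible witness for $N$ whose value equals $v(\pi)$. If anything, you are more careful than the paper's own proof, since you explicitly verify capacity/conservation for the summed flow and the matching property for the union, whereas the paper leaves these checks implicit.
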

\begin{proof}
1.) Assume there is a coalition structure $\pi$ of the edges which achieves the total social welfare of $s$. This means that the sum of the net flow for each $E'\in \pi$ totals $s$. Since each member of $\pi$ is mutually exclusive, for any $A,B\in \pi$, the flows in $A$ and $B$ do not interact with each other. Now, consider the coalition structure $\pi'=\{E\}$ which consists of the grand coalition. Then $E$ can achieve a network flow of at least $s$ by having exactly the same flows as that of $\pi$, we know that $v(\pi')\geq s$. Therefore, the coalition structure consisting of only the grand coalition attains a social welfare that is at least the social welfare attained by any other coalition structure. 

2.) Assume there is a coalition structure $\pi=\{V_1,\ldots, V_k\}$ of the vertices that attains a social welfare of $s$. Let the maximum weighted matching of the graph $G[V_i]$ restricted to vertices $V_i$ be $m_i$. Then we know that $\sum_{1\leq i\leq k} m_i=s$. Since each member of $\pi$ is mutually exclusive, for any $V_i,V_j\in \pi$, the matchings in $G(V_i)$ and $G(V_j)$ have no intersection with each other. Now, consider the coalition structure $\pi'=\{E\}$ which consists of the grand coalition. Then $V$ can achieve a maximum matching of at least $s$ by having exactly the same matchings as that of vertex sets in $\pi$. This implies that that $v(\pi') \geq s$. Therefore, the coalition structure consisting of only the grand coalition attains a social welfare that is at least the social welfare attained by any other coalition structure.
\end{proof}

On the other hand, the threshold versions of certain games are computationally harder to solve because of their similarity to WVGs~\citep{ABH10a}. As a corollary of Prop.~\ref{pro:WVG-inapprox}, we obtain the following:
\begin{corollary}\label{pro:T-NFG-matching-hard}
	Unless $\mathsf{P}=\mathsf{NP}$, there exists no polynomial-time algorithm which computes an $\alpha$-optimal coalition structure for $\alpha<2$ and for the following classes of games: 1. T-NFG. 2. T-Matching game and 3. T-GG$^+$.
	\eat{
	\begin{enumerate}
		\item T-NFG
		\item T-Matching game
		\item T-GG$^+$
	\end{enumerate}
	}
\end{corollary}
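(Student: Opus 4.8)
The plan is to reduce from \textsc{Partition} exactly as in the proof of Proposition~\ref{pro:WVG-inapprox}: for each of the three classes I will build a threshold game whose optimal social welfare is $2$ on yes-instances of \textsc{Partition} and $1$ on no-instances, so that any polynomial-time $\alpha$-approximation with $\alpha<2$ would separate the two cases and thereby decide \textsc{Partition}, forcing $\mathsf P=\mathsf{NP}$. Fix a \textsc{Partition} instance $\{w_1,\dots,w_n\}$ and assume (rescaling if needed) that $W:=\sum_i w_i=2q$ for a positive integer $q$; the threshold in every construction will be $t=q$.

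All three reductions share one ``independent gadget'' template: create $n$ pairwise non-interacting gadgets, the $i$-th of which contributes value $w_i$ to a coalition iff that coalition owns all of the gadget's players, and nothing otherwise. For T-NFG, use the flow network with source $s$, sink $t$, internal vertices $u_1,\dots,u_n$, and edges $(s,u_i),(u_i,t)$ each of capacity $w_i$; the only $s$-$t$ paths are the vertex-disjoint $s\to u_i\to t$, so for a coalition $C\subseteq E$ the restricted maximum flow is $v(C)=\sum_{i:\{(s,u_i),(u_i,t)\}\subseteq C} w_i$ (if multigraphs are allowed, a single parallel edge per player of capacity $w_i$ works too, and the threshold game is then literally the WVG $[q;w_1,\dots,w_n]$ of Proposition~\ref{pro:WVG-inapprox}). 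For T-Matching and T-GG$^+$, take the weighted graph that is the disjoint union of the $n$ edges $\{x_i,y_i\}$ of weight $w_i>0$: this graph is itself a matching, so for every vertex coalition $C$ both its maximum-weight matching and its total induced edge weight equal $\sum_{i:\{x_i,y_i\}\subseteq C} w_i$; hence the two constructions coincide, and all weights are positive as GG$^+$ demands.

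In each construction a coalition is credited for a gadget only if it contains every player of that gadget, so no gadget can be credited to two disjoint coalitions at once. Consequently the social welfare of a coalition structure is at most the number of parts $P_j$ in an induced partition of $\{1,\dots,n\}$ with $\sum_{i\in P_j}w_i\ge q$; since $\sum_i w_i=2q$ there are at most two such parts, and there are exactly two precisely when $\{w_1,\dots,w_n\}$ splits into two blocks of weight $q$, i.e.\ on yes-instances. As the single-coalition structure always has value $W=2q\ge q$, the optimum is $2$ on yes-instances and $1$ on no-instances, which is the desired $2$-versus-$1$ gap.

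The only points needing care — the closest thing to an obstacle here — are bookkeeping items: the classes involved impose no connectivity or simple-graph restriction that would break the gadgets, but if one wished to enforce connectivity it is restored cheaply (for the flow network, by adding edges that create no new $s$-$t$ path; for the graph games, by connecting edges of total weight below $1$, which, the relevant coalition values being integers, cannot by themselves reach the integer threshold $q$), with only a routine verification; and one must check that restricting a maximum flow or matching to a coalition decomposes over the gadgets, which is immediate from their pairwise vertex-disjointness. Beyond that, the argument is a direct transcription of the proof of Proposition~\ref{pro:WVG-inapprox}.
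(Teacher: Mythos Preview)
Your proof is correct and follows essentially the same route as the paper: for T-NFG the paper uses $n$ parallel $s$--$t$ edges of capacities $w_i$ (the multigraph variant you mention), and for T-Matching and T-GG$^+$ it uses the same disjoint-edges graph $\{\{v_{2i-1},v_{2i}\}:i\le n\}$ with weights $w_i$, in each case embedding the WVG $[q;w_1,\dots,w_n]$ and invoking the $2$-versus-$1$ gap from Proposition~\ref{pro:WVG-inapprox}. Your two-edge path gadget for T-NFG and your extra remarks on connectivity are harmless variations, not a different approach.
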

\eat{
\begin{proof}
We prove $\mathsf{NP}$-hardness by a reduction from {\sc OptCS(WVG)} for each of the respective games.
\begin{enumerate}
\item Take any WVG $v_I=[q;w_1,\ldots, w_n]$. Create a corresponding threshold matching game $G_I$. Game $G_I$ consists of threshold $q$ and weighted graph $(V,E,w)$ where $V=\{s,t\}$ and $E=\{1,\ldots, n\}$ consists of $n$ parallel edges with capacity $w_i$ for edge $i$. 
Then we know that a partition of players in $v_I$ gets social welfare $x$ if and only a partition of edges $E$ gets social welfare $x$. 
Since we already know the problem (of even approximating {\sc OptCS}) is $\mathsf{NP}$-complete for WVGs, it follows that the problem is also $\mathsf{NP}$-complete for T-NFGs.
\item Take any WVG $v_I=[q;w_1,\ldots, w_n]$. Create a corresponding threshold matching game $G_I$. Game $G_I$ consists of threshold $q$ and weighted graph $(V,E,w)$ where $|V|=2n$, $V = \{v_1, \ldots, v_{2n}\}$, $|E|=n$ and $E=\{\{v_{2i-1}, v_{2i}\}:i=1,\ldots, n\}$ such that $w(\{v_{2i-1}, v_{2i}\})=w_i$. Consider a coalition structure $\pi=\{A,B\}$ for WVGs. Then there is a corresponding coalition structure $\pi'$ for T-Matching game where $\pi'=\{\{v_{2i-1}, v_{2i}:i\in A\},\{v_{2i-1}, v_{2i}:i\in B\}\}$. We then know that coalition structure $\pi$ for $v_I$ attains a social welfare of $x$  if and only if partition $\pi'$ for $G_I$ attains a social welfare of $x$. 
\item We can use the same reduction as for T-Matching games.
\end{enumerate}
\end{proof}
}
In some cases, {\sc OptCS} may be expected to be intractable because the coalitional game is defined on a combinatorial optimization domain which itself is intractable. We observe that even if computing the value of coalitions is intractable, solving {\sc OptCS} may be easy:
\begin{observation}\label{obs:indep-set-game}
Given an instance of maximum independent set, graph $G=(V, E)$, finding the value of the coalition $v(N)$ is $\mathsf{NP}$-hard, but the optimal coalition structure is all singletons.\end{observation}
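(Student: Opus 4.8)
The plan is to exploit the elementary fact that in an independent set game the value of any coalition is trapped between the trivial bound coming from singletons and the cardinality of the coalition. First I would recall the definition: for the independent set game $(N,v)$ on a connected graph $G=(V,E)$ we have $N=V$, and $v(C)$ is the cardinality of a maximum independent set of the induced subgraph $G[C]$. In particular a single vertex $i\in V$ induces an edgeless subgraph, so $v(\{i\})=1$, and hence the all-singletons coalition structure $\pi_{\mathrm{sing}}=\{\{i\}\mid i\in V\}$ attains social welfare $v(\pi_{\mathrm{sing}})=\sum_{i\in V}v(\{i\})=|V|$.

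Next I would establish a matching upper bound for an arbitrary coalition structure. For any $C\subseteq V$, a maximum independent set of $G[C]$ is a subset of $C$, so $v(C)\leq |C|$. Consequently, for every coalition structure $\pi$ of $N$ we get $v(\pi)=\sum_{C\in\pi}v(C)\leq\sum_{C\in\pi}|C|=|V|$, since $\pi$ partitions $V$. Together with the previous paragraph this shows $v(\pi_{\mathrm{sing}})=|V|\geq v(\pi)$ for all $\pi$, i.e. $\pi_{\mathrm{sing}}$ is optimal, which proves the second half of the observation.

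The first half needs no work: computing $v(N)$ is by definition computing the cardinality of a maximum independent set of $G$, which is the classical $\mathsf{NP}$-hard problem, so the value-of-the-grand-coalition query is $\mathsf{NP}$-hard without any further reduction.

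There is essentially no obstacle here; the only thing to be slightly careful about is the convention implicit in the definition of ISG (a single vertex has an independent set of size $1$, so singletons have value exactly $1$) and the observation that the bound $v(C)\leq|C|$ is attained precisely on independent coalitions, in particular on singletons. The content of the observation lies in the contrast it highlights: the per-coalition optimization problem that \emph{defines} $v$ is intractable, yet {\sc OptCS} is trivial, because an optimal partition never has to solve a hard instance of that problem.
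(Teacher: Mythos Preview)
Your argument is correct and is exactly the natural justification; the paper states this result as a bare observation without proof, so there is no alternative approach to compare against.
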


\section{Conclusions}
\label{sec-conc}

\begin{table*}[t]
\centering
\scriptsize 
\begin{tabular}{ll}
\toprule

Game class&Complexity of {\sc OptCS}\\ \midrule
Coalition value oracle (valid type-partition \& const. \#types)&$\mathsf{P}$~(Th. \ref{pro:Monotone-player-types})\\ 

WVG (const no. weight values)&$\mathsf{P}$~(Cor. \ref{cor-wvg-values})\\

(T-)WTCSGs (const. \#skills or const. \#types)&$\mathsf{P}$~(Cor.~\ref{cor:scsg-types})\\

WCSG (const. \#tasks, bounded tree-width skill graph)&$\mathsf{P}$~\citep{BMJK10a}\\

SCG&$\mathsf{P}$~(Prop.~\ref{pro:SCG-easy})\\ 

EPCG and VPCG&$\mathsf{P}$~(Prop.~\ref{pro:EPCG-easy})\\

NFG and Matching Game&$\mathsf{P}$~(Prop.~\ref{pro:NFG-matching-easy})\\

Marginal Contribution Nets&$\mathsf{NP}$-hard~\citep{0CI+09a}\\ 

GG$^{+}$&$\mathsf{P}$~(Obs.~\ref{obs:gg+-easy})\\

Independent Set Game&$\mathsf{P}$~(Obs.~\ref{obs:indep-set-game})\\

GG& Strongly $\mathsf{NP}$-hard (Prop.~\ref{pro:GG-hard})\\

$(N,W^m)$&$\mathsf{NP}$-hard to approx. within const. factor (Prop.~\ref{pro:wm-hard})\\
WVG&Strongly $\mathsf{NP}$-hard (Prop.~\ref{pro:wvg-hard}); \\ & $\mathsf{NP}$-hard to approx. within factor $<2$ (Prop.~\ref{pro:WVG-inapprox})\\ 

T-Matching; T-NFG; T-GG&$\mathsf{NP}$-hard to approx. within factor $<2$  (Cor.~\ref{pro:T-NFG-matching-hard})\\  

CSG&$\mathsf{NP}$-hard even for SCSGs~\citep{BMJK10a}\\

\bottomrule
\end{tabular}
\caption{Summary of complexity results for {\sc OptCS}}
\label{OptCS-summary}
\end{table*}

Coalition structure generation is an active area of research in multiagent systems. 
We presented a general positive algorithmic result for coalition structure generation, namely that an optimal coalition structure can be computed in polynomial time if the player types are known and the number of player types is bounded by a constant. 
In many large multiagent systems, it is a valid assumption that there are a lot of agents but the agents can be divided into a bounded number of strategic classes. For example, skill games are well motivated for coordinated rescue operation settings~\citep{BaRo08a,BMJK10a}. In these settings, there may be a large number of rescuers but they can be divided into a constant number of types such as firemen, policemen and medics. 
We have also undertaken a detailed study of the complexity of computing an optimal coalition structure for a number of well-studied games and well-motivated games in AI, multiagent systems and operations research. The results are summarized in Table~\ref{OptCS-summary}.

\section*{Acknowledgements}
We thank Hans Georg Seedig and the anonymous referees for helpful feedback.

\bibliography{main}

\end{document}